\journal{Journal of Logical and Algebraic Methods in Programming}
\date{November 15, 2021}
\newtheorem{theorem}{Theorem}[section]
\newtheorem{lemma}[theorem]{Lemma}
\newtheorem{example}[theorem]{Example}
\newtheorem*{theorem*}{Theorem}
\newtheorem{definition}{Definition}[section]
\newtheorem{corollary}[theorem]{Corollary}
\newcommand{\hda}{\mid}
\newcommand{\inl}[1]{\textit{inl}\ #1}
\newcommand{\inr}[1]{\textit{inr}\ #1}
\newcommand\doubleplus{+\kern-1.3ex+\kern0.8ex}
\newcommand{\concat}{\doubleplus}
\newcommand{\sub}{\sigma}
\newcommand{\fun}[4]{\mathtt{fun\ } #1(#2, \dots, #3) \rightarrow #4}
\newcommand{\funz}[2]{\mathtt{fun\ } #1() \rightarrow #2}
\newcommand{\case}[4]{\mathtt{case\ } #1 \mathtt{\ of\ } #2 \mathtt{\ then\ } #3 \mathtt{\ else\ } #4}
\newcommand{\letrec}[5]{\mathtt{rec\ } #1(#2, \dots, #3) \rightarrow #4 \mathtt{\ in\ } #5}
\newcommand{\elet}[3]{\mathtt{let\ } #1 = #2 \mathtt{\ in\ } #3}
\newcommand{\apply}[3]{\mathtt{apply\ } #1(#2, \dots, #3)}
\newcommand{\applyz}[1]{\mathtt{apply\ } #1()}
\newcommand{\idfs}{\mathcal{I}d}
\newcommand{\consfs}[2]{#1 :: #2}
\newcommand{\ctxpre}{\leq_{\textit{ctx}}}
\newcommand{\ctxequiv}{\equiv_{\textit{ctx}}}
\newcommand{\ciupre}{\leq_{\textit{ciu}}}
\newcommand{\ciuequiv}{\equiv_{\textit{ciu}}}
\newcommand{\behpre}{\leq_{b}}
\newcommand{\behequiv}{\equiv_{b}}
\newcommand{\expscoped}[2]{#1 \vdash_{e} #2}
\newcommand{\valscoped}[2]{#1 \vdash_{v} #2}
\newcommand{\subscoped}[3]{#1 \vdash_{s} #2 :: #3}
\newcommand{\framesclosed}[1]{\mathtt{FSC\ } #1}
\newcommand{\frameclosed}[1]{\mathtt{FC\ } #1}
\newcommand{\subst}[2]{#1[#2]}
\newcommand{\idsubst}{\textit{id}}
\newcommand{\restrict}{\setminus}
\newcommand{\preserves}[2]{preserves(#1, #2)}
\newcommand{\termk}[3]{\langle #1, #2 \rangle \Downarrow^{#3}}
\newcommand{\term}[2]{\langle #1, #2 \rangle \Downarrow}
\newcommand{\rewrites}[4]{\langle #1, #2 \rangle \longrightarrow \langle #3 , #4 \rangle}
\newcommand{\rewritesbr}[4]{\langle #1, #2 \rangle \longrightarrow \\ &\qquad\langle #3 , #4 \rangle}
\newcommand{\rewritesn}[5]{\langle #1, #2 \rangle \longrightarrow^{#3} \langle #4 , #5 \rangle}
\newcommand{\rewritesstar}[3]{\langle #1, #2 \rangle \longrightarrow^* #3}
\begin{document}

\begin{frontmatter}

\title{Program Equivalence in an Untyped, Call-by-value Lambda Calculus with Uncurried Recursive Functions}
%\title{Program Equivalence in Sequential Core Erlang}
%\tnotetext[mytitlenote]{}

%% Group authors per affiliation:
\author[elte]{Dániel Horpácsi\corref{corr}}
\ead{daniel-h@elte.hu}

\author[elte]{Péter Bereczky\corref{corr}}
\ead{berpeti@inf.elte.hu}
\cortext[corr]{Corresponding author}

\author[elte,kent]{Simon Thompson}
\ead{s.j.thompson@kent.ac.uk}

\address[elte]{ELTE, Eötvös Loránd University, Hungary}
\address[kent]{University of Kent, United Kingdom}
% \fntext[myfootnote]{Since 1880.}

%% or include affiliations in footnotes:
%\author[mymainaddress,mysecondaryaddress]{Elsevier Inc}
%\ead[url]{www.elsevier.com}

%\author[mysecondaryaddress]{Global Customer Service\corref{mycorrespondingauthor}}
%\cortext[mycorrespondingauthor]{Corresponding author}
%\ead{support@elsevier.com}

%\address[mymainaddress]{1600 John F Kennedy Boulevard, Philadelphia}
%\address[mysecondaryaddress]{360 Park Avenue South, New York}

\begin{abstract}

We aim to reason about the correctness of behaviour-preserving transformations of Erlang programs. Behaviour preservation is characterised by semantic equivalence. Based upon our existing formal semantics for Core Erlang, we investigate potential definitions of suitable equivalence relations. In particular we adapt a number of existing approaches of expression equivalence to a simple functional programming language that carries the main features of sequential Core Erlang; we then examine the properties of the equivalence relations and formally establish connections between them.
%ranging from the simplest behavioural equivalence to logical relations and contextual equivalence.
The results presented in this paper, including all theorems and their proofs, have been machine checked using the Coq proof assistant.
\end{abstract}

\begin{keyword}
Program equivalence \sep Coq \sep Contextual equivalence \sep Logical relation \sep CIU equivalence \sep CIU theorem
\end{keyword}

\end{frontmatter}

%\linenumbers

\section{Introduction}

Most language processors and refactoring tools lack a precise formal specification of how the code is affected by the changes they may make. In particular, refactoring tools are expected not to change the behaviour of any program, but in practice, this property is only validated by testing. This form of verification may or may not provide trust in users willing to refactor industrial-scale code. Higher assurance can be achieved by making formal arguments to verify behaviour preservation. This requires a rigorous formal definition of the programs being refactored, a precise description of the effect of refactorings on these programs, and a suitable definition of program equivalence.
%This paper focuses on formally defining a core language for Erlang and specifying suitable equivalence relations on expressions of this language.

%\subsection{Motivation: Refactoring Correctness by Program Equivalence}
\subsection{Refactoring and Program Equivalence}

A program transformation is said to be a refactoring if it preserves the observable behaviour~\cite{fowler1999refactoring}. The verification of whether a transformation is refactoring can be based upon the concept of semantic program equivalence: the internal workings of the software may be altered, but the result has to be observationally indistinguishable from the original in any arbitrary environment.

%\paragraph{Locality of transformations.}

The typical refactoring process -- reworking a piece of code to increase its quality -- can be decomposed into multiple, smaller refactoring steps (known as \emph{micro-refactorings}). These smaller steps may vary from eliminating unused variables to extracting code portions to function abstractions. Some are local to particular portions of code -- such as an expression, a function or a module -- but some, such as renaming a global function, make extensive modifications across a code base, which seems to require the entire software project to be examined for equivalence when reasoning about the correctness of a complex refactoring.

Syntactically local refactorings are relatively straightforward to define and verify: the expected code changes can be specified with conditional term rewrite rules, and their correctness can be checked via contextual equivalence with side conditions. 
% SJT insert reference for "conditional contextual equivalence".
Many refactorings are not local, however, and they span across compilation units via semantic dependencies such as name binding, data flow or control flow. For example, renaming of a function will affect not only the module in which the function is defined, but also every module in which the function is applied. Essentially, extensive refactoring is a special composition of local changes (expressed with conditional term rewrite rules), connected by semantic dependencies in the program. %The context for extensive refactorings is not syntactic, but semantic (i.e. modifications are local to program slices, not syntactic structures).

%A key observation in our previous work~\cite{horpacsi2017trustworthy,horpacsi2016towards} is the separation of local equivalence from global equivalence in the case of extensive transformations: the former property ensures consistency among local changes, while the latter considers the locally consistent changes in arbitrary global context. With this, 
%the verification of extensive steps separates reasoning about the locality to a program slice, the local correctness within the slice (expressed with \emph{conditional contextual equivalence}), and about the local consistency implying program-wide equivalence.

% SJT - rewritten below; Are you happy with this?

A key observation in our previous work~\cite{horpacsi2017trustworthy,horpacsi2016towards} is that, in the case of extensive transformations, it is possible to separate local equivalence from global equivalence. The former property ensures consistency among local changes, while the latter considers the locally consistent changes in arbitrary global context. With this, 
verification of extensive steps separates reasoning about the locality to a program slice and the local correctness within the slice (expressed with conditional contextual equivalence), from the argument about how local consistency implies program-wide equivalence.

We formalise this separation in the notion of a \emph{refactoring scheme}: a pre-verified, general transformation strategy used for defining various refactoring steps. Using such a scheme, the verification of each specific refactoring step only requires the formal proof of local correctness. This implies that the verification of both structurally local and %(semantics-driven) 
extensive code transformations is reduced to checking conditional equivalence between program terms. 
The work presented here aims to provide a basis for reasoning about refactorings by establishing the appropriate semantics and definitions of program equivalence for languages like Core Erlang.

%Observe that the local correctness in this case is expressed in terms of a set of  statements, while the argument of the local correctness implying equivalence on the code base level needs inductive reasoning with the semantics in general.

%To avoid reasoning about equivalence of extensive code bases several times, we can outline the scope of each smaller transformation and reason about the correctness in two steps: whether the transformation is locally correct, and whether the local equivalence implies semantic equivalence in the scope of the entire code base. The syntactically local changes may be seen as pairs of concrete expressions: the original and transformed versions of the code fragments. Reasoning about correctness in this case is done by establishing \emph{contextual equivalence} between the original and result expressions.

%In practice, many of the small changes are instances of a similar transformation logic, implementing a form of parametric expression rephrasing. Furthermore, these typical local transformations can easily be defined by using conditional term rewrite rules --- the verification of such refactoring steps may be carried out by checking a \emph{conditional contextual equivalence} between the matching and replacement patterns in the rewrite rule.

\subsection{Scope and Contributions}

Although our ultimate goal is to prove Erlang refactorings correct, as a stepping stone we focus on Erlang's intermediate language (Core Erlang) and we investigate refactoring correctness and program equivalence on this lower-level language first. For readers  unfamiliar with Erlang, we note that Core Erlang contains all the essential features of the full language, so that the work presented here can be extended to the full language in a routine way. 
%SJT happy with this ^ ??
Core Erlang is also the intermediate language for a number of other programming languages, including Elixir and LFE, meaning that program equivalence in the core language can be used for checking equivalence in these other languages, assuming the existence of a trusted translation from the source language to Core Erlang.

%Although our ultimate goal is to prove Erlang refactorings correct, results on Core Erlang may contribute to other research projects that target languages in the BEAM family, that is languages that translate to Core Erlang during the compilation process, e.g. Elixir or Erlang~\cite{elixircomp}. In particular, given a trusted translation to Core Erlang, we can reason about semantic equivalence, and therefore refactoring step correctness, for such a language.

We have already developed natural and functional big-step semantics for sequential Core Erlang~\cite{bereczky2020core,bereczky2020machine,bereczky2021validation} in the Coq proof assistant and defined simple behavioural equivalence. In this paper we take this work a step further by distilling sequential Core Erlang into an extended lambda calculus and systematically defining and comparing various equivalence relations on that system. In contrast with related work with similar goals, the novelty of our approach lies with the Erlang-like features (such as dynamic typing, call-by-value evaluation, pattern matching, and uncurried functions) of the programming language and full, machine-checked formalisation.

% and the proof that termination is sufficient for reasoning about equivalence (we followed the draft proof by Pitts~\cite{pitts1997operationally}).

%In this paper, we investigate a number of approaches to defining semantic equivalence and compare them with respect to their appropriateness for reasoning about refactoring correctness. To take the first steps in this area, we investigate these concepts on a simple functional language (which is a subset of Core Erlang). %with uncurried function application, recursion, pattern matching, and lists.
% \todo[inline]{Maybe this should be moved to rel. work.}
The main contributions of the paper are:
\begin{itemize}
	\item A formal semantics for the language under investigation, which is essentially an extended, untyped, strict lambda calculus with uncurried functions.
	\item A formalisation of a number of termination-based equivalence concepts, and a proof of their coincidence for this language. % SJT can we omit this -->  all based on the related literature.
	\item A characterisation of behavioural equivalence that is not solely based on termination, and which is also proved to coincide with the other equivalences.
	\item A machine-checked formalisation of the language, definitions and proofs in the Coq proof assistant.
\end{itemize}

The rest of the paper is structured as follows. In \Cref{sec:related} we overview related work on program equivalence for similar languages and point to some influential results that we reused in our work. In \Cref{sec:semantics} we formally define the syntax and semantics of the language we investigate, then in \Cref{sec:naive} we specify various equivalence definitions for Core Erlang, including simple behavioural equivalence~\cite{pierce2010software}. Next in \Cref{sec:equiv} we make these definitions precise using an approach based on logical relations~\cite{pitts2000operational,wand2018contextual}, pointing out their advantages and disadvantages. Finally, \Cref{sec:conclusion} concludes.

\section{Related Work}
\label{sec:related}

In the early stages of our project, we  developed an inductive big-step semantics for sequential Core Erlang~\cite{bereczky2020core,bereczky2020machine} considering exceptions and simple side effects, which has also been implemented in Coq. This semantics is based on related research on Erlang and Core Erlang, e.g. reversible semantics and debugging~\cite{lanese2018theory,lanese2018cauder,nishida2016reversible}, a framework for reasoning about Erlang~\cite{fredlund2001framework} and symbolic execution~\cite{vidal2014towards}. We have also investigated different big-step definition styles~\cite{bereczky2020comparison}, and recently we also implemented an equivalent functional big-step semantics~\cite{owens2016functional} which enabled extensive validation~\cite{bereczky2021validation}.

However, for proving refactoring correctness, we need precise definitions of program equivalence. We have proved some simple expressions equivalent in our formalisation, however, the used equivalence concept is rather an ad hoc definition of behavioural equivalence, which motivates the investigation of (other) precise approaches.

%\paragraph{Program Equivalence}

The literature discusses a number of ways to describe and investigate equivalence between programs of a language after having its semantics formally defined. In this section, we provide a brief overview of related results.

The simplest notion of program equivalence is behavioural equivalence~\cite{pierce2010software}, which requires syntactical equality of the program evaluation results. This notion is a rather strict characterisation of the equivalence concept, but may prove sufficient in some simple cases. Clearly, its main advantage is the simple definition lacking any reference to expression contexts.

Another fundamental equivalence definition is the \emph{contextual} (also called \emph{observational}) equivalence, which characterises the congruence property of the equivalence relation. Two programs are contextually equivalent if they are indistinguishable in arbitrary expression contexts observing their behaviour
\footnote{In most cases, the observed behaviour is simply the termination property of the programs, as it is sufficient to ensure the result values to be equivalent by a suitable value equivalence relation. Pitts provides a draft proof of this in~\cite{pitts1997operationally}, but we give a formal proof in \cref{sec:termEquiv}}.
While it is straightforward to disprove programs equivalent with this notion by giving a counterexample context, establishing contextual equivalence requires induction on all expression contexts. To avoid this burdensome induction, related work proposes alternative definitions of the relation, such as bisimulations, CIU (``closed instances of uses'') equivalence and logical relations.

% Applicative bisimulations

Bisimulations (or applicative bisimulations~\cite{abramsky1990thelazy}) are binary relations between expressions based on their reduction being in the same relation. % see: http://www.cs.ox.ac.uk/files/293/lazy.pdf -- page 5
Bisimulation approaches~\cite{pitts1997operationally,simpson2019behavioural} are naturally proved to be sound for contextual equivalence, ensuring that bisimilar programs are equivalent (using Howe's method for proving congruence~\cite{howe1996proving}); therefore, bisimulations can be used to prove expressions equivalent. However, completeness does not necessarily hold as bisimulations sometimes provide a finer approach (defining a stricter equivalence), distinguishing some terms that are contextually equivalent~\cite{dallago2014coinductive}.

% CIU, CTX, Logrels

The other widespread approach to establishing contextual equivalence is using alternative equivalences (CIU equivalence, logical relations) and proving that these relations coincide with contextual equivalence. The idea of the CIU equivalence originates from Mason and Talcott~\cite{mason1991equivalence}, but since then other authors investigated and used this characterisation with success for imperative languages~\cite{gordon1999compilation}, for variants of lambda calculus~\cite{craig2018triangulating,culpepper2017contextual,wand2018contextual}, lambda calculus with recursive types~\cite{ahmed2006stepindexed,birkedal2013stepindexed}, and with quantified types~\cite{ahmed2006stepindexed}. Some authors also included effects in this characterisation~\cite{mason1991equivalence,benton1999modans}. Most of these results used the same idea: they defined a form of continuation-style semantics (such as reduction semantics, frame-stack semantics) or termination relation, then defined CIU and contextual equivalence and proved that they coincide. The novelty of the various results lie in the choice of the type system and the language constructs under investigation.

Other authors~\cite{pitts2000operational,ahmed2006stepindexed,wand2018contextual,pitts2010step,culpepper2017contextual} also defined logical relations and proved that they also coincide with the previous equivalences. In case of the logical relations, there was two main approach: type-indexed~\cite{pitts2000operational} or step- (and type)-indexed~\cite{ahmed2006stepindexed,wand2018contextual,pitts2010step,culpepper2017contextual}.

There are further approaches to prove program equivalence, for example using algorithms; however, for these to work we need either an operational semantics based on term rewriting~\cite{lucanu2013program}, or reasoning in some dedicated logics like matching logic~\cite{rosu2014language}.

In the remainder of this paper, we investigate step-indexed logical relations, CIU equivalence, and contextual equivalence for a simple untyped functional language  based on the previous results of the above-mentioned authors. We remind the reader that our definitions and theorems are fully formalised in the Coq proof assistant~\cite{coreerlangmini} (similarly to the one by Wand et al. in Coq~\cite{wand2018contextual}, or the one by McLaughlin et al. in Agda~\cite{craig2018triangulating}).

\section{An Untyped, Strict, Uncurried Functional Language with Recursion and Pattern Matching}
\label{sec:semantics}

% To investigate equivalence definitions, we chose a simple subset of sequential Core Erlang without considering exceptions and side effect first~\cite{coreerlangmini}. Note that we plan to reinject side effects in the near future which would also count towards the equivalence, just as in Section~\ref{todo}. In this section, we give a precise definition of this small sub-language.

In this section, we present the syntax and semantics of a functional language that resembles a simple subset of Core Erlang. In later sections we will define equivalence relations over expressions of this language.

\subsection{Syntax}
\label{sec:syntax}

The language we investigate has integers, lists and functions. In syntax, it supports literals and patterns for integers and lists, uncurried function abstraction and application, pattern matching with \verb|case|, \verb|let|-binding and explicitly recursive function abstraction with \verb|rec|. Values are defined as a subset of expressions, and include function closures for representing values of function abstractions. Integer addition is built-in to enable writing meaningful programs in the language. Note that unlike the majority of the cited related work, we also include pattern matching, which allows us observe the differences between values and implement conditionals in the object theory\footnote{Without pattern matching (or similar language features that allow for inspecting and comparing the contents of values) we cannot prove that contextual equivalence (which coincides with CIU equivalence) and behavioural equivalence coincide (\Cref{thm:behciu}).}. % BP: Dani, CTX and CIU can be proved to coincide (I even managed to do it), but not with behavioural!

%This small language supports integers, lists literals, variable names, function identifiers (function name, arity pairs), recursive functions, and lists as syntactical values. Compound expressions are lists, function applications, \texttt{let} and \texttt{rec} (sometimes also called \texttt{letrec}) bindings, while addition expressions are also added to be able to write meaningful examples. Moreover, pattern matching is also added to be able to eliminate lists and simulate conditional expressions. Without pattern matching (or some similar language elements, e.g. head and tail of lists), \Cref{thm:behciu} cannot be proved.

We use the following three auxiliary functions for pattern matching (we omit the formal definitions):
\begin{itemize}
	\item $\textit{vars}(p)$ is the list of variables used in $p$.
    \item $\textit{is\_match}(p, v)$ decides whether the value $v$ matches the pattern $p$.
    \item $\textit{match}(p, v)$ yields the result of matching $v$ with $p$, as a substitution (\Cref{sec:subst}).
\end{itemize}

For simplicity, in the machine-checked formalisation~\cite{coreerlangmini} we used a nameless variable representation, that is variables and function identifiers are de Bruijn indices~\cite{deBruijn}. This way, in the body of a binder, the outermost indices denote the bound variables, e.g. for functions, the outermost index point to the identifier of the function, while the next $k$ indices denote the formal parameters. One could write non-recursive functions by simply omitting the use of this index. On the other hand, in this paper, we present our results with explicit names for readability, however, we regard alpha equivalent expressions, patterns, values as equal.

\begin{definition}[Syntax of the language]
\begin{flalign*}
p ::=&\ l \hda x \hda [ p_1 | p_2 ] \hda [] \\
v ::=&\ l \hda x \hda f/k \hda \fun{f/k}{x_1}{x_k}{e} \hda [ v_1 | v_2 ] \hda [] \\
e ::=&\ v \hda [ e_1 |e_2 ] \hda \apply{e}{e_1}{e_k} \hda \case{e_1}{p}{e_2}{e_3}  \\
\hda\ &\letrec{f/k}{x_1}{x_k}{e_0}{e}
\hda \elet{x}{e_1}{e_2} \hda e_1 + e_2 
\end{flalign*}
\end{definition}

It is also to be highlighted that, especially in contrast to the work of Wand et al.~\cite{wand2018contextual}, our language does not require parts of compound expressions to be in normal form (i.e. reduced to value by explicit sequencing with \verb|let| expressions). This poses some challenges in the formalisation, but it keeps the gap between the formalised language and Core Erlang reasonable.

%(e.g. it does not require the subexpressions to be values thus, it does not require explicit sequencing with \texttt{let} expressions). Obviously, this results in a more complex semantics, however, we kept in mind that in the future we are planning on reasoning about refactorings in Erlang and Core Erlang where the programs are not necessarily written by restricted syntax.

\subsection{Substitution}
\label{sec:subst}

Before giving semantics expressions, we define parallel substitutions~\cite{autosubst} as functions mapping names (either variable names or function identifiers) to the union of expressions and names (Coq's notations are used to denote the constructors ($\textit{inl}$, $\textit{inr}$) of the union type). Names are contained within expressions, but we need to include name-to-name mappings explicitly, since conversion from names to expressions is ambiguous (context-dependent) in the de Bruijn representation. %\footnote{Alternatively, a pair of substitutions could be used, one for variables and one for function identifiers.}
With this, we can define the identity substitution and substitution update as follows.

\begin{definition}[Identity substitution]
	$\idsubst(x) = \inr{x}$
\end{definition}

\begin{definition}[Substitution update]\label{def:substupdate}
	$(\sub[x \mapsto e])(y) := \begin{cases}
									\inl{e} \text{ if x = y}\\
									\sub(y) \text{ otherwise}	                             
	                             \end{cases}$
\end{definition}

We use $\sub[x_1 \mapsto e_1, \dots, x_k \mapsto e_k]$ to denote the composition of a finite number of updates (it is assumed that $x_1, \dots, x_k$ are all different). We remind the reader that in the de Bruijn representation, a parallel substitution is specified with the replacements $e_1, \dots, e_k$, meaning the indexes $\#1, \dots, \#k$ substituted for the expressions $e_1, \dots, e_k$. We use the notation $\sub \restrict \{x_1, \dots, x_k\}$ to remove the bindings of $x_1, \dots, x_k$ from $\sub$ (i.e. replace the bindings with identity substitution).

Substitutions can be applied to any expressions. We use the common notation $\subst{e}{\sub}$ to denote the application of the substitution $\sub$ to the expression $e$ (replacing the free occurrences of names in $e$ with the corresponding expressions in $\sub$).

\begin{definition}[Application of parallel substitution]
	\begin{flalign*}
		\subst{l}{\sub} :=& l\\
		\subst{[]}{\sub} :=& [] \\
		\subst{x}{\sub} :=& \begin{cases} y \text{ if $\sub(x) = \inr{y}$ ($y$ is a name)} \\ e \text{ if $\sub(x) = \inl{e}$} \end{cases}\\
		\subst{f/k}{\sub} :=& \begin{cases} y \text{ if $\sub(f/k) = \inr{y}$ ($y$ is a name)} \\ e \text{ if $\sub(f/k) = \inl{e}$} \end{cases}\\
		\subst{(\fun{f/k}{x_1}{x_k}{e})}{\sub} :=& \\\fun{f/k}{x_1}{x_k}{(\subst{&e}{\sub \restrict \{f/k, x_1, \dots, x_k\}})}\\
		\subst{(\apply{e}{e_1}{e_k})}{\sub} :=& \apply{\subst{e}{\sub}}{\subst{e_1}{\sub}}{\subst{e_k}{\sub}} \\
		\subst{(\elet{x}{e_1}{e_2})}{\sub} :=& \elet{x}{\subst{e_1}{\sub}}{\subst{e_2}{\sub \restrict \{x\}}}\\
		\subst{(\letrec{f/k}{x_1}{x_k}{e_0}{e})}{\sub} :=&
		\\\letrec{f/k}{x_1}{x_k}{\subst{e_0}{\sub& \restrict \{f/k, x_1, \dots, x_k\}}}{\subst{e}{\sub \restrict \{f/k\}}}\\
		\subst{(e_1 + e_2)}{\sub} :=& \subst{e_1}{\sub} + \subst{e_2}{\sub}\\
		\subst{[e_1 | e_2]}{\sub} :=& [\subst{e_1}{\sub} | \subst{e_2}{\sub}]\\
		\subst{(\case{e_1}{p}{e_2}{e_3})}{\sub} :=& \\\case{\subst{e_1}{\sub}}{&p}{\subst{e_2}{\sub \restrict \textit{vars(p)}}}{\subst{e_3}{\sub}}
	\end{flalign*}
\end{definition}

For simplicity and readability, we handle $\sub(x)$ ($\sub(f/k)$, respectively) as expressions. In the Coq implementation we followed the techniques known from the \emph{autosubst} library~\cite{autosubst} with de Bruijn representation to implement capture-avoidance for substitutions. For more details we refer to the code~\cite{coreerlangmini}.

\subsection{Variable scoping}
\label{sec:scoping}

Note that this small language is untyped, we cannot define a typing judgement and index equivalence relations with types; instead, we define a variable scoping judgement, following the footsteps of Wand et al.~\cite{wand2018contextual}. This allows us to define our equivalence concepts in a scope-indexed way, where expressions can only be equivalent if they have the same scope. The scoping rules are given in \Cref{fig:scoping}.

\begin{definition}[Name scoping of values and expressions]\ \\
	\begin{itemize}
    	\item $\valscoped{\Gamma}{v}$: all free variables of $v$ are in $\Gamma$. $v$ is a closed value, if $\valscoped{\emptyset}{v}$.
   		\item $\expscoped{\Gamma}{e}$: all free variables of $e$ are in $\Gamma$. $e$ is a closed expression, if $\expscoped{\emptyset}{e}$.
	\end{itemize}
\end{definition}

\begin{figure}[htb]
    \centering
    
    \begin{prooftree}
    \infer0{\valscoped{\Gamma}{l}}
    \end{prooftree}
    \hfill
    \begin{prooftree}
    \hypo{x \in \Gamma}
    \infer1{\valscoped{\Gamma}{x}}
    \end{prooftree}
    \hfill
    \begin{prooftree}
    \hypo{f/k \in \Gamma}
    \infer1{\valscoped{\Gamma}{f/k}}
    \end{prooftree}
    \hfill
    \begin{prooftree}
    \hypo{\expscoped{\Gamma \cup \{f/k, x_1, \dots, x_k\}}{e}}
    \infer1{\valscoped{\Gamma}{\fun{f/k}{x_1}{x_k}{e}}}
    \end{prooftree}
    \hfill
    \begin{prooftree}
        \hypo{\valscoped{\Gamma}{v_1}}
        \hypo{\valscoped{\Gamma}{v_2}}
        \infer2{\valscoped{\Gamma}{[v_1 | v_2]}}
    \end{prooftree}
    
    \vspace{0.3cm}
    \begin{prooftree}
    \infer0{\valscoped{\Gamma}{[]}}
    \end{prooftree}
    \hfill
    \begin{prooftree}
    \hypo{\valscoped{\Gamma}{e}}
    \infer1{\expscoped{\Gamma}{e}}
    \end{prooftree}
    \hfill
    \begin{prooftree}
    \hypo{\expscoped{\Gamma}{e}}
    \hypo{\expscoped{\Gamma}{e_1}}
    \hypo{\cdots}
    \hypo{\expscoped{\Gamma}{e_k}}
    \infer4{\expscoped{\Gamma}{\apply{e}{e_1}{e_k}}}
    \end{prooftree}
    \hfill
    \begin{prooftree}
    \hypo{\expscoped{\Gamma}{e_1}}
    \hypo{\expscoped{\Gamma \cup \{x\}}{e_2}}
    \infer2{\expscoped{\Gamma}{\elet{x}{e_1}{e_2}}}
    \end{prooftree}
    
    \vspace{0.3cm}
    
    \begin{prooftree}
    \hypo{\expscoped{\Gamma \cup \{f/k, x_1, \dots, x_k\}}{e_0}}
    \hypo{\expscoped{\Gamma \cup \{f/k\}}{e}}
    \infer2{\expscoped{\Gamma}{\letrec{f/k}{x_1}{x_k}{e_0}{e}}}
    \end{prooftree}
    \hfill
    \begin{prooftree}
    \hypo{\expscoped{\Gamma}{e_1}}
    \hypo{\expscoped{\Gamma}{e_2}}
    \infer2{\expscoped{\Gamma}{e_1 + e_2}}
    \end{prooftree}
    
    \vspace{0.3cm}
    
    \begin{prooftree}
    \hypo{\expscoped{\Gamma}{e_1}}
    \hypo{\expscoped{\Gamma \cup \textit{vars}(p)}{e_2}}
    \hypo{\expscoped{\Gamma}{e_3}}
    \infer3{\expscoped{\Gamma}{\case{e_1}{p}{e_2}{e_3}}}
    \end{prooftree}
    \hfill
    \begin{prooftree}
        \hypo{\expscoped{\Gamma}{e_1}}
        \hypo{\expscoped{\Gamma}{e_2}}
        \infer2{\expscoped{\Gamma}{[e_1 | e_2]}}
    \end{prooftree}
    
    \caption{Scoping rules}
    \label{fig:scoping}
\end{figure}

The scoping judgement can be generalised to substitutions. This relation characterises that the substitution $\sub$ maps the names in $\Gamma$ to \emph{values} that are scoped in $\Delta$.

\begin{definition}[Substitution scoping]
$\subscoped{\Gamma}{\sub}{\Delta} := \forall x \in \Gamma: \valscoped{\Delta}{\sub(x)}$.
\end{definition}

Based on scoping, a number of lemmas can be proven about substitution; we highlight the most important ones. In some lemmas, we state the same (or similar) properties for both expressions and values, either due to mutual induction or because the property needs to be shown for both types separately. For instance, when proving a general statement on function \emph{values}, the induction hypothesis requires the statement to hold on the body \emph{expression} of the function.

\begin{lemma}[Scoping of extended substitutions]\label{thm:extsubst}
\begin{flalign*}
	&\forall v, \sub, \Delta, \Gamma: \valscoped{\Delta}{v} \land \subscoped{\Gamma}{\sub}{\Delta} \implies \forall x \notin \Gamma: \subscoped{\Gamma \cup \{x\}}{\subst{\sub}{x \mapsto v}}{\Delta}
\end{flalign*}
\end{lemma}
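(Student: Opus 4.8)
The plan is to unfold the definition of substitution scoping, reduce the goal to a single pointwise obligation, and discharge it by one case distinction; no induction is needed. Concretely, fix $v$, $\sigma$, $\Delta$, $\Gamma$ with $\valscoped{\Delta}{v}$ and $\subscoped{\Gamma}{\sigma}{\Delta}$, and fix a name $x \notin \Gamma$. By the definition of $\subscoped{\Gamma \cup \{x\}}{\sigma[x \mapsto v]}{\Delta}$ it suffices to take an arbitrary $y \in \Gamma \cup \{x\}$ and establish $\valscoped{\Delta}{(\sigma[x \mapsto v])(y)}$.

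Now I would case split on whether $y = x$. If $y = x$, then \Cref{def:substupdate} gives $(\sigma[x \mapsto v])(y) = \inl{v}$, so the obligation is precisely the assumption $\valscoped{\Delta}{v}$. If $y \neq x$, then since $y \in \Gamma \cup \{x\}$ we have $y \in \Gamma$, and \Cref{def:substupdate} gives $(\sigma[x \mapsto v])(y) = \sigma(y)$; the obligation $\valscoped{\Delta}{\sigma(y)}$ then follows by instantiating the hypothesis $\subscoped{\Gamma}{\sigma}{\Delta}$ at $y$. This completes the argument.

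I expect essentially no genuine obstacle here: the statement is just the observation that extending a $\Gamma$-to-$\Delta$ scoped substitution with a single $\Delta$-scoped value yields a $(\Gamma \cup \{x\})$-to-$\Delta$ scoped one. The only points needing a little care are bookkeeping ones: handling the disjoint-union codomain of substitutions (reading $\sigma(y)$ and $\inl{v}$ as scoped values in the sense of \Cref{sec:scoping}, following the convention of \Cref{sec:subst}), and, in the de Bruijn formalisation, using a decidable equality test on names to perform the case split; the hypothesis $x \notin \Gamma$ is not actually required for the conclusion but does no harm. The mutual-induction remark preceding the lemma (``state it for both values and expressions'') is irrelevant for this particular statement, since $v$ is treated as an opaque scoped value throughout.
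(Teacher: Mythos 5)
Your proposal is correct and matches the paper's own proof essentially verbatim: both unfold the definition of substitution scoping and discharge the pointwise obligation by a single case split on $y = x$, using the value-scoping hypothesis in the first case and the substitution-scoping hypothesis in the second. The additional bookkeeping remarks (the $\inl{}$ tagging and the dispensability of $x \notin \Gamma$) are accurate but do not change the argument.
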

\begin{proof}
%% (\Cref{def:substupdate})
	By the definition of substitution update, $\subst{\sub}{x \mapsto v}(x) = v$, which satisfies $\valscoped{\Delta}{v}$ according to the first assumption. On the other hand, for any $y \neq x$, $\subst{\sub}{x \mapsto v} y = \sub(y)$, which also satisfies $\valscoped{\Delta}{\sub(y)}$ according to the second assumption.
\end{proof}

\begin{lemma}[Scope extension (weakening)]\label{thm:scopeExtend}
\begin{flalign*}
  &\forall v, \sub, \Gamma, \Delta: \valscoped{\Gamma}{v} \implies \valscoped{\Gamma \cup \Delta}{v} \\
  &\forall e, \sub, \Gamma, \Delta: \expscoped{\Gamma}{e} \implies \expscoped{\Gamma \cup \Delta}{e}
\end{flalign*}
\end{lemma}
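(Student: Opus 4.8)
The plan is to prove both statements simultaneously by mutual structural induction on $v$ and $e$ (equivalently, by induction on the derivations of $\valscoped{\Gamma}{v}$ and $\expscoped{\Gamma}{e}$ using the rules of \Cref{fig:scoping}); the substitution $\sub$ plays no role in the statement and may be ignored. It is essential to keep the generalisation over $\Delta$ (and over $\Gamma$) inside the induction hypotheses, since the binder cases must re-instantiate them at a larger variable set.

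For the leaf cases the argument is immediate: literals and $[]$ are scoped in every context; if $x \in \Gamma$ then $x \in \Gamma \cup \Delta$, and likewise for a function identifier $f/k$. The compound cases that introduce no new bindings --- $[v_1 \mid v_2]$, $[e_1 \mid e_2]$, $e_1 + e_2$, $\apply{e}{e_1}{e_k}$, and the $\Gamma$-scoped premises of $\case{e_1}{p}{e_2}{e_3}$ (namely $e_1$ and $e_3$) --- follow by applying the induction hypotheses to the immediate subterms with the same $\Gamma$ and the same $\Delta$ and re-applying the corresponding scoping rule.

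The only cases that need a small manipulation are the binders: $\fun{f/k}{x_1}{x_k}{e}$, $\elet{x}{e_1}{e_2}$, $\letrec{f/k}{x_1}{x_k}{e_0}{e}$, and the middle branch of the \verb|case|. In each, some subterm is assumed scoped in $\Gamma \cup B$ for a finite set $B$ of freshly bound names (e.g.\ $B = \{f/k, x_1, \dots, x_k\}$, or $B = \{x\}$, or $B = \textit{vars}(p)$). Since set union is associative and commutative, $(\Gamma \cup \Delta) \cup B = (\Gamma \cup B) \cup \Delta$, so the induction hypothesis applied with context $\Gamma \cup B$ and extension $\Delta$ yields scoping in $(\Gamma \cup \Delta) \cup B$, which is precisely the premise needed to re-apply the rule and conclude scoping in $\Gamma \cup \Delta$. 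The function-abstraction \emph{value} case additionally relies on the mutual induction, since its body is an \emph{expression}.

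There is essentially no genuine obstacle here: this is the standard weakening property of a syntax-directed scoping judgement. The only points of care are (i) keeping $\Delta$ and $\Gamma$ universally quantified in the induction hypotheses so that they can be re-instantiated at $\Gamma \cup B$ in the binder cases, and (ii) carrying out the induction mutually so that the function-value case can appeal to the expression statement for its body --- in the Coq development this is discharged by the combined induction principle for values and expressions.
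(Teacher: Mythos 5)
Your proof is correct and follows essentially the same route as the paper: mutual induction on the two scoping judgements, with trivial leaf cases and rule re-application in the compound cases. Your explicit treatment of the binder cases via $(\Gamma \cup \Delta) \cup B = (\Gamma \cup B) \cup \Delta$ is a detail the paper glosses over, but it is the same argument.
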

\begin{proof}
	We prove this property by mutual induction on the $\valscoped{\Gamma}{v}$ and $\valscoped{\Gamma}{e}$ judgements.
	\begin{itemize}
		\item For the recursive cases (see \Cref{fig:scoping}), it suffices to apply the rules on the induction hypotheses. For example, if $e = e_1 + e_2$, we get the following induction hypotheses:
		\begin{itemize}
			\item $\valscoped{\Gamma \cup \Delta}{e_1} \land \expscoped{\Gamma \cup \Delta}{e_1}$
			\item $\valscoped{\Gamma \cup \Delta}{e_2} \land \expscoped{\Gamma \cup \Delta}{e_2}$
		\end{itemize}
		We can use these to show that the premises of $\expscoped{\Gamma \cup \Delta}{e_1 + e_2}$ hold.
		\item If $v = l$, then $\valscoped{\Gamma \cup \Delta}{l}$ holds by definition.
		\item If $v = x$ (or $v = f/k$ respectively), then according to the premise $x \in \Gamma$, then by set reasoning $x \in \Gamma \cup \Delta$.
	\end{itemize}
\end{proof}

\begin{lemma}[Scoping of restricted substitutions]\label{thm:restrictScope}
\begin{flalign*}
\forall \sub, \Gamma, \Delta:&\ \subscoped{\Gamma}{\sub}{\Delta} \implies\\
&\subscoped{\Gamma \cup \{x_1, \dots, x_k\}}{\sub \restrict \{x_1, \dots, x_k\}}{\Delta \cup \{x_1, \dots, x_k\}}
\end{flalign*}
\end{lemma}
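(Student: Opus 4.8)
The plan is to unfold the definition of substitution scoping and argue pointwise. Since $\restrict$ replaces the listed bindings by the identity substitution, establishing $\subscoped{\Gamma \cup \{x_1, \dots, x_k\}}{\sub \restrict \{x_1, \dots, x_k\}}{\Delta \cup \{x_1, \dots, x_k\}}$ amounts to showing, for an arbitrary name $y \in \Gamma \cup \{x_1,\dots,x_k\}$, that $\valscoped{\Delta \cup \{x_1,\dots,x_k\}}{(\sub \restrict \{x_1,\dots,x_k\})(y)}$. First I would split on whether $y$ is one of the restricted names.

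If $y = x_i$ for some $i \in \{1,\dots,k\}$, then by the meaning of $\restrict$ we have $(\sub \restrict \{x_1,\dots,x_k\})(y) = \idsubst(y) = \inr{y}$, i.e. the name $y$ itself; since $y = x_i \in \Delta \cup \{x_1,\dots,x_k\}$, the atomic variable (resp. function-identifier) scoping rule of \Cref{fig:scoping} gives $\valscoped{\Delta \cup \{x_1,\dots,x_k\}}{y}$. Note this branch must also cover names lying in both $\Gamma$ and $\{x_1,\dots,x_k\}$, so the case analysis has to give priority to membership in the restricted set, matching the behaviour of $\restrict$.

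Otherwise $y \notin \{x_1,\dots,x_k\}$, hence $y \in \Gamma$ and $(\sub \restrict \{x_1,\dots,x_k\})(y) = \sub(y)$. The hypothesis $\subscoped{\Gamma}{\sub}{\Delta}$ yields $\valscoped{\Delta}{\sub(y)}$, and weakening (\Cref{thm:scopeExtend}) upgrades this to $\valscoped{\Delta \cup \{x_1,\dots,x_k\}}{\sub(y)}$, as required.

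I do not expect a real obstacle here: the only points demanding care are (i) keeping the case analysis consistent with the ``restricted names win'' semantics of $\restrict$ when a name belongs to both $\Gamma$ and $\{x_1,\dots,x_k\}$, and (ii) remembering that the restricted positions are filled by $\idsubst$, whose value on $y$ is the name $y$, so that the conclusion for that branch is discharged by the atomic scoping rule rather than by the hypothesis on $\sub$. Everything else is just the pointwise definition of $\vdash_s$ together with a single appeal to \Cref{thm:scopeExtend}.
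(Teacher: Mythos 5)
Your proof is correct and follows essentially the same route as the paper's: a pointwise case split on whether the name lies in the restricted set, discharging the restricted case by the atomic scoping rule and the remaining case by the hypothesis together with weakening (\Cref{thm:scopeExtend}). Your extra remarks about the identity substitution and the priority of the restricted names only make explicit what the paper leaves implicit.
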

\begin{proof}
	We do case distinction on whether the name was removed from the substitution.
	\begin{itemize}
		\item For any $x_i$, $(\sub \restrict \{x_1, \dots, x_k\})(x_i) = x_i$ by definition, and clearly $x_i \in \Delta \cup \{x_1, \dots, x_k\}$ thus it satisfies $\valscoped{\Delta \cup \{x_1, \dots, x_k\}}{x_i}$.
		\item For any other $x_i \neq x \in \Gamma$, $(\sub \restrict \{x_1, \dots, x_k\})(x) = \sub(x)$, for which we already know that $\valscoped{\Delta}{\sub(x)}$ by the premise. Now we use \Cref{thm:scopeExtend} with $\{x_1, \dots, x_k\}$ to conclude the proof.
	\end{itemize}
\end{proof}

Now we define the set of variables that are not modified by a substitution.

\begin{definition}[Substitution identities]\ \\\indent
	$\preserves{\Gamma}{\sub} := \forall x \in \Gamma: \sub(x) = \inr{x}$
\end{definition}

Any names removed from a substitution become identity, therefore preserved.

\begin{lemma}[Restriction of substitution identities]\label{thm:restrictIdentities}
	\begin{flalign*}
		&\preserves{\Gamma}{\sub} \implies \preserves{\Gamma \cup \{x_1, \dots, x_k\}}{\sub \restrict \{x_1, \dots, x_k\}}
	\end{flalign*}
\end{lemma}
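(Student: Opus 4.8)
The plan is to unfold both the hypothesis and the goal through their definitions and reduce everything to the already-established fact that restricting a substitution replaces the bindings of $x_1, \dots, x_k$ with the identity substitution. Concretely, I would first fix an arbitrary $y \in \Gamma \cup \{x_1, \dots, x_k\}$ and show $(\sub \restrict \{x_1, \dots, x_k\})(y) = \inr{y}$, which is exactly what $\preserves{\Gamma \cup \{x_1, \dots, x_k\}}{\sub \restrict \{x_1, \dots, x_k\}}$ demands by \Cref{def:substupdate} and the definition of $\preserves{\cdot}{\cdot}$.

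The argument then splits into two cases according to whether $y$ is one of the restricted names. First, if $y = x_i$ for some $i \in \{1, \dots, k\}$, then by the definition of the restriction operation $\restrict$ (which replaces the bindings of $x_1, \dots, x_k$ with identity), $(\sub \restrict \{x_1, \dots, x_k\})(x_i) = \idsubst(x_i) = \inr{x_i}$, so the goal holds directly. Second, if $y \neq x_i$ for all $i$, then since $y \in \Gamma \cup \{x_1, \dots, x_k\}$ we must have $y \in \Gamma$; moreover restriction leaves bindings outside $\{x_1, \dots, x_k\}$ untouched, so $(\sub \restrict \{x_1, \dots, x_k\})(y) = \sub(y)$, and by the hypothesis $\preserves{\Gamma}{\sub}$ this equals $\inr{y}$. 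Combining the two cases closes the proof.

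I do not expect any real obstacle here: the lemma is essentially a bookkeeping step that records, in the $\preserves{\cdot}{\cdot}$ vocabulary, a property that is immediate from how $\restrict$ is defined. The only point that needs a little care is making the definition of $\restrict$ explicit enough that ``replace the bindings with identity substitution'' is used as a fact about the value of $(\sub \restrict \{x_1, \dots, x_k\})$ at each $x_i$; once that is spelled out, both cases are one-line computations. In the machine-checked development this likely amounts to a short case analysis on membership followed by \texttt{reflexivity} or an appeal to the hypothesis.
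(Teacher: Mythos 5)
Your proposal is correct and matches the paper's proof exactly: both unfold the definition of $\preserves{\cdot}{\cdot}$ and split on whether the name is one of the restricted $x_i$ (identity by the definition of $\restrict$) or lies in $\Gamma$ (preserved by the hypothesis). No gaps; this is the same two-case bookkeeping argument.
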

\begin{proof}%\Cref{thm:restrictIdentities}
	We need to prove that $\forall x \in \Gamma \cup \{x_1, \dots, x_k\}: (\sub \restrict \{x_1, \dots, x_k\})(x_i) = \inr{x_i}$
	\begin{itemize}
		\item For any $x_i$, $(\sub \restrict \{x_1, \dots, x_k\})(x_i) = \inr{x_i}$ by definition. 
		\item For any other $x \in \Gamma$, $(\sub \restrict \{x_1, \dots, x_k\})(x) = \inr{x}$ since $\preserves{\Gamma}{\sub}$.
	\end{itemize}
\end{proof}

\begin{lemma}[Preserving substitution is identity]\label{thm:preserves}
	\begin{flalign*}
		&\forall v, \sub, \Gamma: \valscoped{\Gamma}{v} \land \preserves{\Gamma}{\sub} \implies \subst{v}{\sub} = v \\
		&\forall e, \sub, \Gamma: \expscoped{\Gamma}{e} \land \preserves{\Gamma}{\sub} \implies \subst{e}{\sub} = e
	\end{flalign*}
\end{lemma}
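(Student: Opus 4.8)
The plan is to prove both conjuncts simultaneously by mutual induction on the scoping derivations $\valscoped{\Gamma}{v}$ and $\expscoped{\Gamma}{e}$ (equivalently, by structural induction on $v$ and $e$), keeping $\Gamma$ and $\sub$ universally quantified so that the induction hypotheses can be reused at the larger scopes that arise under binders. In each case we unfold $\subst{\cdot}{\sub}$ according to its defining equations and check that the result is syntactically the term we started with.

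The base cases are immediate: for $v = l$ and $v = []$ the substitution acts as the identity by definition, and for $v = x$ (and likewise $v = f/k$) the scoping rule gives $x \in \Gamma$, so $\preserves{\Gamma}{\sub}$ forces $\sub(x) = \inr{x}$, whence $\subst{x}{\sub} = x$ by the definition of substitution application. For the non-binding compound constructs -- $[v_1 | v_2]$, $[e_1 | e_2]$, $e_1 + e_2$, and $\apply{e}{e_1}{e_k}$ -- the substitution distributes over the immediate subterms without altering $\sub$, each subterm is scoped in the same $\Gamma$, and $\preserves{\Gamma}{\sub}$ still holds, so the induction hypotheses apply directly and we reassemble the identical term.

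The only cases requiring thought are the binders: the function value $\fun{f/k}{x_1}{x_k}{e}$, the expression $\elet{x}{e_1}{e_2}$, $\letrec{f/k}{x_1}{x_k}{e_0}{e}$, and $\case{e_1}{p}{e_2}{e_3}$. Here the substitution reaching a body is a restriction $\sub \restrict S$, where $S$ is the set of names bound at that position: $\{f/k, x_1, \dots, x_k\}$ for the function body and for the $\texttt{rec}$-body $e_0$, $\{f/k\}$ for the continuation $e$ of the $\texttt{rec}$ expression, $\{x\}$ for $e_2$ in the $\texttt{let}$, and $\textit{vars}(p)$ for the then-branch $e_2$ of the $\texttt{case}$; the remaining subterms keep $\sub$ and stay scoped in $\Gamma$. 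The relevant scoping rule (\Cref{fig:scoping}) tells us that each such body is scoped in exactly $\Gamma \cup S$, and \Cref{thm:restrictIdentities} gives $\preserves{\Gamma \cup S}{\sub \restrict S}$; hence the induction hypothesis for the body, instantiated at scope $\Gamma \cup S$ with substitution $\sub \restrict S$, returns the body unchanged. Applying the induction hypotheses to the remaining subterms as in the non-binding cases and reassembling yields the original term.

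The main obstacle -- essentially the only point requiring care -- is the bookkeeping in the binder cases: lining up the set of bound names appearing in the scoping rule with the restriction set appearing in the definition of substitution application, and invoking the induction hypothesis at the correct enlarged scope with the correct restricted substitution. Since \Cref{thm:restrictIdentities} is tailored precisely to propagate $\preserves{\cdot}{\cdot}$ through a restriction, once the mutual induction is set up with $\Gamma$ and $\sub$ generalised, each case is routine.
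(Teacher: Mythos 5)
Your proof is correct and follows essentially the same route as the paper's: mutual induction on the scoping judgements, with the variable case discharged by $\preserves{\Gamma}{\sub}$ and the binder cases handled by invoking \Cref{thm:restrictIdentities} to propagate the preservation hypothesis through the restricted substitution at the enlarged scope. The only cosmetic difference is that by inducting on the scoping derivation you get $x \in \Gamma$ for free in the variable case, where the paper's write-up also mentions the (vacuous) $x \notin \Gamma$ branch.
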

\begin{proof}%\Cref{thm:preserves}
	We prove this property by mutual induction on the $\valscoped{\Gamma}{v}$ and $\valscoped{\Gamma}{e}$ judgements.
	\begin{itemize}
		\item If $v = l$, the substitution does not affect $l$.
		\item If $v = x$ (or $v = f/k$ respectively), there are two cases:
		\begin{itemize}
			\item $x \in \Gamma$: then $\subst{x}{\sub} = \sub(x) = x$, because $\preserves{\Gamma}{\sub}$.
			\item $x \notin \Gamma$: then $\valscoped{\Gamma}{x}$ premise is not satisfied, thus we get a contradiction.
		\end{itemize}
		\item Simple recursive cases ($\apply{e}{e_1}{e_k}$, $\case{e_1}{p}{e_2}{e_3}$,  $e_1 + e_2$) follow from the equalities of the induction hypotheses.
		\item In case of more complex recursive cases ($\mathtt{let}, \mathtt{rec}, \mathtt{fun}$) we use \Cref{thm:restrictIdentities}. We show the proof for $\letrec{f/k}{x_1}{x_k}{e_0}{e}$, the other cases can be constructed similarly.
		The goal is the following:
		\begin{align*}
			\letrec{f/k&}{x_1}{x_k}{\subst{e_0}{\sub \restrict \{f/k, x_1, \dots, x_k\}}}{\subst{e}{\sub \restrict \{f/k\}}} = \\
			&\letrec{f/k}{x_1}{x_k}{e_0}{e}
		\end{align*}
		We have two induction hypotheses:
		\begin{itemize}
			\item $\preserves{\Gamma \cup \{f/k, x_1, \dots, x_k\}}{e_0} \implies \subst{e_0}{\sub \restrict \{f/k, x_1, \dots, x_k\}} = e_0$
			\item $\preserves{\Gamma \cup \{f/k\}}{e} \implies \subst{e}{\sub \restrict \{f/k\}} = e$
		\end{itemize}
		
		By using \Cref{thm:restrictIdentities}, we can prove both premises, then we can conclude the proof with applying the assumptions.
	\end{itemize}
\end{proof}

\begin{corollary}[Closed expressions are not modified by substitutions]\label{thm:closedSubst}
\begin{flalign*}
  &\forall v, \sub: \valscoped{\emptyset}{v} \implies \subst{v}{\sub} = v \\
  &\forall e, \sub: \expscoped{\emptyset}{e} \implies \subst{e}{\sub} = e
\end{flalign*}
\end{corollary}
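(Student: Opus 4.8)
The plan is to obtain this directly from \Cref{thm:preserves} by specialising the scoping environment $\Gamma$ to $\emptyset$, so no fresh induction is needed: all the inductive work has already been carried out in \Cref{thm:preserves}. The only hypothesis of that lemma that does not come for free from closedness is $\preserves{\emptyset}{\sub}$, and this holds for \emph{every} substitution $\sub$. Indeed, by the definition of $\preserves{\Gamma}{\sub}$, the statement $\preserves{\emptyset}{\sub}$ unfolds to $\forall x \in \emptyset: \sub(x) = \inr{x}$, which is vacuously true.

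Concretely, first I would fix an arbitrary value $v$ (respectively expression $e$) and an arbitrary substitution $\sub$, and assume $\valscoped{\emptyset}{v}$ (respectively $\expscoped{\emptyset}{e}$). Next I would note $\preserves{\emptyset}{\sub}$ as above. Then applying the first (respectively second) conjunct of \Cref{thm:preserves} with $\Gamma := \emptyset$ gives $\subst{v}{\sub} = v$ (respectively $\subst{e}{\sub} = e$), which is exactly the claim; since $v$, $e$ and $\sub$ were arbitrary, the universally quantified statements follow.

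There is essentially no obstacle here — the statement is a routine corollary of \Cref{thm:preserves}. The single point that a proof assistant would force us to discharge explicitly is the vacuous verification of $\preserves{\emptyset}{\sub}$, after which the conclusion is immediate.
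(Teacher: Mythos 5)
Your proof is correct and matches the paper's argument exactly: both note that $\preserves{\emptyset}{\sub}$ holds vacuously for any $\sub$ and then instantiate \Cref{thm:preserves} with $\Gamma = \emptyset$. Nothing further is needed.
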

\begin{proof}
	First, we prove that $\forall \sub, \preserves{\emptyset}{\sub}$, which holds because there are no elements in $\emptyset$. Thereafter, we conclude the proof with using \Cref{thm:preserves} with $\Gamma = \emptyset$, and this way both its premises are satisfied.
\end{proof}

Scoping of values (and expressions) can be combined with the scoping of substitutions: 
applying a scoped substitution on a scoped value (or expression) keeps it scoped, and conversely, a substituted value (or expression) can be shown to be scoped without the substitution.

\begin{lemma}[Substitution preserves scoping]\label{thm:substPres}
\begin{flalign*}
  &\forall e, \Gamma: \expscoped{\Gamma}{e} \implies \forall \Delta, \sub: \subscoped{\Gamma}{\sub}{\Delta} \implies \expscoped{\Delta}{\subst{e}{\sub}} \\
  &\forall v, \Gamma: \valscoped{\Gamma}{v} \implies \forall \Delta, \sub: \subscoped{\Gamma}{\sub}{\Delta} \implies \valscoped{\Delta}{\subst{v}{\sub}}
\end{flalign*}
\end{lemma}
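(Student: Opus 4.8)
The plan is to prove the two statements simultaneously by mutual induction on the derivations of $\valscoped{\Gamma}{v}$ and $\expscoped{\Gamma}{e}$ (following the rules of \Cref{fig:scoping}), keeping $\Delta$ and $\sub$ universally quantified \emph{inside} the induction, exactly as the statement is phrased. This generalisation is essential: in the cases for binders the substitution that reaches a subexpression is a restriction of $\sub$, and it must be applied at a target scope strictly larger than $\Delta$, so the induction hypotheses have to be available at all $\Delta$ and $\sub$, not just the ones fixed at the outset.

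For the leaf cases the argument is immediate. If $v = l$ or $v = []$ then $\subst{v}{\sub} = v$ and $\valscoped{\Delta}{v}$ holds by the corresponding axiom. If $v = x$ (or $v = f/k$), inversion of the scoping rule gives $x \in \Gamma$, so $\subscoped{\Gamma}{\sub}{\Delta}$ yields $\valscoped{\Delta}{\sub(x)}$; since $\subst{x}{\sub} = \sub(x)$ by the definition of substitution application, we are done. The purely structural cases — $[e_1 \hda e_2]$, $e_1 + e_2$, $\apply{e}{e_1}{e_k}$ — follow by inversion of the scoping rule, applying each induction hypothesis with the \emph{same} $\Delta$ and $\sub$, and re-applying the rule, using that substitution commutes with these constructors.

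The real work is in the binder cases ($\mathtt{fun}$, $\mathtt{let}$, $\mathtt{rec}$, $\mathtt{case}$), where substitution descends into a subexpression under a restricted substitution and into an enlarged scope. Take $\letrec{f/k}{x_1}{x_k}{e_0}{e}$ as representative. Inversion gives $\expscoped{\Gamma \cup \{f/k, x_1, \dots, x_k\}}{e_0}$ and $\expscoped{\Gamma \cup \{f/k\}}{e}$, and after unfolding the substitution the goal is
\[
\expscoped{\Delta}{\letrec{f/k}{x_1}{x_k}{\subst{e_0}{\sub \restrict \{f/k, x_1, \dots, x_k\}}}{\subst{e}{\sub \restrict \{f/k\}}}}.
\]
To discharge the first premise of the scoping rule I would apply the induction hypothesis for $e_0$ with target scope $\Delta \cup \{f/k, x_1, \dots, x_k\}$ and substitution $\sub \restrict \{f/k, x_1, \dots, x_k\}$; its side condition $\subscoped{\Gamma \cup \{f/k, x_1, \dots, x_k\}}{\sub \restrict \{f/k, x_1, \dots, x_k\}}{\Delta \cup \{f/k, x_1, \dots, x_k\}}$ is exactly an instance of \Cref{thm:restrictScope}. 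Symmetrically, the second premise follows from the induction hypothesis for $e$ with $\sub \restrict \{f/k\}$ and $\Delta \cup \{f/k\}$, again via \Cref{thm:restrictScope}. The $\mathtt{let}$, $\mathtt{fun}$ and $\mathtt{case}$ cases have the identical shape, with $\textit{vars}(p)$ playing the role of the bound names in the $\mathtt{case}$ branch.

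I expect the main obstacle to be purely bookkeeping rather than conceptual: making the bound-variable sets agree on the nose between the restricted substitution, the extended source scope produced by inversion, and the extended target scope chosen to instantiate the induction hypothesis, so that \Cref{thm:restrictScope} applies directly. No lemma beyond \Cref{thm:restrictScope} (which itself packages the weakening step \Cref{thm:scopeExtend}) is needed; in the de Bruijn formalisation these scope manipulations become mechanical, at the price of having to shift the indices consistently wherever a binder is crossed.
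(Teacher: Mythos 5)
Your proof is correct and follows essentially the same route as the paper's: structural (mutual) induction with $\Delta$ and $\sub$ kept universally quantified, immediate leaf and purely structural cases, and \Cref{thm:restrictScope} discharging the side condition of the induction hypothesis in the binder cases. The only (harmless) divergence is that you correctly treat $\mathtt{case}$ as a binder case because of $\sub \restrict \textit{vars}(p)$ in its second branch, whereas the paper's sketch lists it among the simple recursive constructs.
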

\begin{proof}
	We carry out induction on the syntax of expressions.
	\begin{itemize}
		\item If $v = l$, then it is scoped in any $\Delta$ by definition.
		\item If $v = x$ (or $v = f/k$ resp.), then $x \in \Gamma$ because of $\valscoped{\Gamma}{x}$, and in addition, $\subst{x}{\sub} = \sub(x)$. Thus we prove $\valscoped{\Delta}{\sub(x)}$ by using the other premise ($\subscoped{\Gamma}{\sub}{\Delta}$).
		\item For the simple recursive language elements ($e_1 + e_2, \mathtt{case}, \mathtt{apply}$), we can just use the induction hypotheses to scope the subexpressions, and then use the scoping rule for the specific construct to conclude the proof.
		\item For the more complex language elements ($\mathtt{let}, \mathtt{rec}, \mathtt{fun}$) we also need to involve \Cref{thm:restrictScope} in the proof. For instance, in the case of $\mathtt{fun}$, after using the scoping rule for it, we need to prove that $\expscoped{\Delta \cup \{x_1, \dots, x_k\}}{\subst{e}{\sub \restrict \{x_1, \dots, x_k\}}}$ which can be done by applying \Cref{thm:restrictScope} on the premise $\subscoped{\Gamma}{\sub}{\Delta}$, and then applying the induction hypothesis.
	\end{itemize}
\end{proof}

\begin{lemma}[Substitution implies scoping]\label{thm:substImpl}
\begin{flalign*}
    &\forall e, \Gamma, \Delta: (\forall \sub: \subscoped{\Gamma}{\sub}{\Delta} \implies \expscoped{\Delta}{\subst{e}{\sub}}) \implies \expscoped{\Gamma}{e}\\
    &\forall v, \Gamma, \Delta: (\forall \sub: \subscoped{\Gamma}{\sub}{\Delta} \implies \valscoped{\Delta}{\subst{v}{\sub}}) \implies \expscoped{\Gamma}{v}
\end{flalign*}
\end{lemma}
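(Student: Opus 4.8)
The statement is exactly the converse of \Cref{thm:substPres}, but it cannot be proved by induction on a scoping derivation for $e$ because we do not have one; instead I would derive $\expscoped{\Gamma}{e}$ by instantiating the universally quantified hypothesis with a single, judiciously chosen \emph{probing} substitution that exposes the free names of $e$ lying outside $\Gamma$.

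Pick a name $w$ with $w \notin \Delta$ that moreover does not occur (free or bound) in $e$; such a $w$ exists since $\Delta$ and $e$ are finite. Define $\sub$ by $\sub(x) = []$ for $x \in \Gamma$ and $\sub(x) = w$ for every other name. First I would check $\subscoped{\Gamma}{\sub}{\Delta}$: for $x \in \Gamma$ we have $\sub(x) = []$ and $\valscoped{\Delta}{[]}$ holds by the scoping rule for $[]$, while the behaviour of $\sub$ off $\Gamma$ is immaterial to substitution scoping. Feeding this $\sub$ to the hypothesis yields $\expscoped{\Delta}{\subst{e}{\sub}}$ (respectively $\valscoped{\Delta}{\subst{v}{\sub}}$ in the value case). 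The point is that $\subst{e}{\sub}$ is $e$ with every free occurrence of a $\Gamma$-name turned into the closed value $[]$ and every free occurrence of any other name turned into $w$; since $w \notin \Delta$, a $\Delta$-scoped expression cannot mention $w$ freely, so $e$ has no free name outside $\Gamma$, which is precisely $\expscoped{\Gamma}{e}$.

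To make that last step rigorous I would prove, by mutual structural induction on $e$ and $v$, a slightly generalised auxiliary claim: for every partition of names into a set $\Gamma_0$ mapped to $[]$, a disjoint set $\Gamma_1$ left as the identity, and the rest mapped to one name $w$, and for every scope $\Theta$ with $\Gamma_1 \subseteq \Theta$ and $w \notin \Theta$, if $\expscoped{\Theta}{\subst{e}{\sub}}$ (resp. $\valscoped{\Theta}{\subst{v}{\sub}}$) then $\expscoped{\Gamma_0 \cup \Gamma_1}{e}$ (resp. $\expscoped{\Gamma_0 \cup \Gamma_1}{v}$); the lemma is the case $\Gamma_0 = \Gamma$, $\Gamma_1 = \emptyset$, $\Theta = \Delta$. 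The literal and empty-list cases are immediate; the variable and function-identifier cases are exactly where $w \notin \Theta$ is used to rule out a name outside $\Gamma_0 \cup \Gamma_1$; and the non-binding compound cases follow by inverting the relevant rule from \Cref{fig:scoping} and applying the induction hypotheses unchanged.

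The binding cases ($\mathtt{fun}$, $\mathtt{rec}$, $\mathtt{let}$, $\mathtt{case}$) are the crux, and the reason the generalised invariant is needed: pushing $\sub$ under a binder with bound names $B$ replaces it by $\sub \restrict B$, which, after choosing $B$ (without loss of generality, working up to the alpha-equivalence that the paper treats as equality, as handled by the \emph{autosubst}-style machinery in the formalisation) disjoint from $\Gamma_0$, $\Gamma_1$, $\Theta$ and $w$, is precisely the probe for the partition $(\Gamma_0,\ \Gamma_1 \cup B)$ at scope $\Theta \cup B$. Inverting the binder's scoping rule and applying the induction hypothesis at this enlarged scope gives exactly the premise that the binder's scoping rule demands at scope $\Gamma_0 \cup \Gamma_1$, and in the value cases one additionally appeals to the value-to-expression rule to reach the stated $\expscoped{\Gamma}{v}$ conclusion. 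Equivalently, the whole argument can be run contrapositively: were $e$ not $\Gamma$-scoped it would have a free name outside $\Gamma$, which $\sub$ turns into a free $w \notin \Delta$, contradicting $\expscoped{\Delta}{\subst{e}{\sub}}$.
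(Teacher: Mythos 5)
Your argument is sound, and it is worth noting that the paper gives no textual proof of this lemma at all: it defers entirely to the Coq formalisation, remarking only that the de Bruijn representation ``plays a key role''. Your probing-substitution idea --- send every name in $\Gamma$ to the closed value $[]$ and every other name to a single fresh $w \notin \Delta$ not occurring in $e$, then observe that $\expscoped{\Delta}{\subst{e}{\sub}}$ forbids $w$ from surfacing free --- is the named-variable rendering of the standard nameless proof (where one probes with a renaming that shifts out-of-scope indices past $\Delta$), and your generalised invariant with the three-way partition $(\Gamma_0, \Gamma_1, \text{rest})$ is exactly the strengthening needed to push the argument under binders. Two points deserve the care you partly acknowledge: first, the step from ``$\subst{e}{\sub}$ has no free occurrence of $w$'' to ``$e$ has no free name outside $\Gamma$'' needs the mutual structural induction you sketch, since the scoping judgement of \Cref{fig:scoping} is inductively defined rather than literally a statement about free-variable sets, and the inversion of the derivation for the substituted term relies on substitution preserving the head constructor (true for everything but names); second, the freshness discipline for the bound names $B$ --- keeping them disjoint from $\Gamma_0$, $\Gamma_1$, $\Theta$ and $w$ up to alpha-conversion --- is precisely the bookkeeping that the de Bruijn formalisation discharges automatically, which is why the mechanised proof looks different from, but is morally identical to, yours. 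What your version buys is a self-contained, paper-readable proof; what the formalisation buys is that the capture-avoidance and freshness side conditions are handled by the \emph{autosubst}-style machinery rather than by hand.
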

\begin{proof}
	For this proof, we refer to the formalisation~\cite{coreerlangmini}, because the nameless variable representation plays a key role in it.
%	In this proof, we follow the footsteps of the machine-checked formalisation of Wand et.al.~\cite{wand2018contextual}. First, we define a special substitution:
%	\begin{flalign*}
%		\sigma(\Gamma,\Delta) :=
%		x \mapsto \begin{cases}
%			x \textit{ if } x \in \Gamma \land x \in \Delta\\
%			0 \textit{ if } x \in \Gamma \land x \notin \Delta\\
%			\textit{fresh} \textit{ if } x \notin \Gamma\\
%		\end{cases}
%	\end{flalign*}
%	In this definition, \textit{fresh} is such a name, that it does not occur in $\Delta$. There are two simple consequences of this definition:
%	\begin{flalign*}
%		&\forall \Gamma, \Delta: \subscoped{\Gamma}{\sigma(\Gamma,\Delta)}{\Delta}\\
%		&\forall \Gamma, \Delta, x: \sigma(\Gamma,\Delta) \restrict \{x\} = \sigma(\Gamma \setminus \{x\},\Delta \setminus \{x\})
%	\end{flalign*}
\end{proof}

\subsection{Frame Stack Semantics}
\label{sec:framesemantics}

After defining syntax, substitution, and name scoping, we define a frame stack semantics for our language by reusing techniques from Pitts~\cite{pitts2000operational}. This sort of definition resembles reduction style~\cite{felleisen1986control} semantics, but in this case ``the evaluation contexts are decomposed into a list of evaluation frames''~\cite{pitts2000operational}. In other words, the frame stack is essentially the continuation of the computation.

% In our previous work, we defined natural and functional big-step-style semantics for a larger subset of sequential Core Erlang~\cite{bereczky2020machine,bereczky2020core,coreerlang}. However, it proved to be not fine-grained enough for precise definitions of program equivalence (described in Sec.~\ref{sec:equiv}). On the other hand, the frame-stack style is more fine-grained and it also handles continuations (frame stacks) that behave similarly to syntactical expression contexts during the evaluation which property is quite advantegous when proving correspondence between the equivalence definitions (in Sec.~\ref{sec:equiv}).

First, we describe the syntax of frame stacks and frames, which resemble syntactical contexts. For the stacks, we use lists and use the following notations: $\idfs$ denotes the empty stack and $[F_1, \dots, F_n]$ denotes the stack containing $F_1, \dots, F_n$ frames in this order, and $\consfs{F}{K}$ prepends the frame $F$ to $K$.

\begin{definition}[Syntax of frames, frame stacks]
\begin{flalign*}
F ::=\ & \apply{\Box}{e_1}{e_k} \hda \apply{v_0}{\Box}{e_k} \hda \cdots \hda \apply{v_0}{v_1}{\Box} \\
& \hda \elet{x}{\Box}{e_2} \hda
  \Box + e_2 \hda v_1 + \Box \hda \case{\Box}{e_2}{e_3}\\
& \hda [e_1 | \Box] \hda [\Box | v_2] \\
K ::=\ &[F_1, \dots, F_n]
\end{flalign*}
\end{definition}

We reuse the notation for substitutions $\subst{F}{e}$ for the substitution of the $\Box$ in frame $F$.

\begin{definition}[Closed frame stacks]
A frame stack $K$ is said to be closed ($\framesclosed{K}$) if and only if all of its frames are closed. A frame is closed, if its constituent values and expressions are closed; formally:
\begin{align*}
\frameclosed{(\apply{\Box}{e_1}{e_k})} &:= \expscoped{\emptyset}{e_1} \land \dots \land \expscoped{\emptyset}{e_k} \\
\frameclosed{(\apply{v_0}{\Box}{e_k})} &:= \valscoped{\emptyset}{v_0} \land \expscoped{\emptyset}{e_2} \dots \land \expscoped{\emptyset}{e_k} \\
\frameclosed{(\apply{v_0}{v_1}{\Box})} &:= \valscoped{\emptyset}{v_0} \land \valscoped{\emptyset}{v_1} \land \dots \land \valscoped{\emptyset}{v_{k-1}}\\
\frameclosed{(\elet{x}{\Box}{e_2})} &:= \expscoped{\{x\}}{e_2}\\
\frameclosed{(\Box + e_2)} &:= \expscoped{\emptyset}{e_2}\\
\frameclosed{(v_1 + \Box)} &:= \valscoped{\emptyset}{v_1}\\
\frameclosed{(\case{\Box}{p}{e_2}{e_3})} &:= \expscoped{\textit{vars}(p)}{e_2} \land \expscoped{\emptyset}{e_3}\\
\frameclosed{[e_1 | \Box]} &:= \expscoped{\emptyset}{e_1}\\
\frameclosed{[\Box | v_2]} &:= \valscoped{\emptyset}{v_2}
\end{align*}
\end{definition}

Now we can define the one-step evaluation relation of the frame stack semantics, followed by the many-step relation given as the reflexive, transitive closure of the one-step relation with step-indexing (see \Cref{fig:step}). We invite the reader to pay attention to the evaluation order determined by the rules, which follows the principles of (Core) Erlang~\cite{neuhausser2007abstraction}.
For the sake of simplifying some equivalence-related notions, we also define any-step evaluation in the following way: $\rewritesstar{K}{e}{v} := \exists n: \rewritesn{K}{e}{n}{\idfs}{v}$.

%\begin{figure}[htb]
%    \centering
\begin{figure}
    \begin{flalign*}
    &\rewrites{K}{\elet{x}{e_1}{e_2}}{\elet{x}{\Box}{e_2} :: K}{e_1} \\
    &\rewrites{K}{[e_1|e_2]}{[e_1|\Box] :: K}{e_2} \\
    &\rewrites{K}{\apply{e}{e_1}{e_k}}{\apply{\Box}{e_1}{e_k} :: K}{e} \\
    &\rewrites{K}{e_1 + e_2}{\Box + e_2 :: K}{e_1} \\
    &\rewritesbr{K}{\letrec{f/k}{x_1}{x_n}{e_0}{e}}{K}{\subst{e}{f/k \mapsto \fun{ f/k}{x_1}{x_n}{e_0}}} \\
    &\rewritesbr{K}{\case{e_1}{p}{e_2}{e_3}}{\case{\Box}{p}{e_2}{e_3} :: K}{e_1} \\
    &\rewrites{\apply{\Box}{e_1}{e_k} :: K}{v}{\apply{v}{\Box}{e_k} :: K}{e_1} \\
    &\rewrites{\applyz{\Box} :: K}{\funz{f/0}{e}}{K}{\subst{e}{f/0 \mapsto \funz{f/0}{e}}} \\
    &\rewritesbr{\mathtt{apply\ } v(v_1, \dots, v_{i-1}, \Box, e_{i + 1}, \dots, e_k) :: K}{v_i}{\mathtt{apply\ } v(v_1, \dots, v_{i-1}, v_i, \Box, e_{i + 2}, \dots, e_k) :: K}{e_{i+1}} \qquad(\text{if } i < k)\\
    &\rewritesbr{\apply{(\fun{f/k}{x_1}{x_k}{e})}{v_1}{\Box} :: K}{v_k}{K}{\subst{e}{f/k \mapsto \fun{f/k}{x_1}{x_k}{e}, x_1 \mapsto v_1, \dots, x_k \mapsto v_k}} \\
    &\rewrites{\elet{x}{\Box}{e_2}::K}{v}{K}{e_2[x \mapsto v]} \\
    &\rewrites{[e_1|\Box] :: K}{v_2}{[\Box | v_2] :: K}{e_1}\\
    &\rewrites{[\Box|v_2] :: K}{v_1}{K}{[v_1|v_2]}\\
    &\rewrites{\case{\Box}{p}{e_2}{e_3}::K}{v}{K}{\subst{e_2}{\textit{match}(p,v)}} \\&\hspace{26em}(\text{if } \textit{is\_match}(p,v)) \\
    &\rewrites{\case{\Box}{p}{e_2}{e_3}::K}{v}{K}{e_3} \qquad(\text{if } \neg\textit{is\_match}(p,v)) \\
    &\rewrites{\Box + e_2::K}{v}{v + \Box::K}{e_2} \\
    &\rewrites{l_1 + \Box::K}{l_2}{K}{l_1 + l_2}\\
    %&\qquad\textit{Here, we restrict on the level of the syntax, that the two values are numbers.} \\
    \end{flalign*}
    
    \begin{prooftree}
    \infer0{\rewritesn{K}{e}{0}{K}{e}}
    \end{prooftree}
    \hfill
    \begin{prooftree}
    \hypo{\rewrites{K}{e}{K'}{e'}}
    \hypo{\rewritesn{K'}{e'}{n}{K''}{e''}}
    \infer2{\rewritesn{K}{e}{1 + n}{K''}{e''}}
    \end{prooftree}

\caption{Frame stack evaluation relations}
\label{fig:step}
\end{figure}
%    \caption{Step-indexed, frame stack-style semantics}
%    \label{fig:semantics}
%\end{figure}

\section{Naive Definitions of Program Equivalence}\label{sec:naive}

This section we briefly overview naive but imprecise definitions of expression equivalence. The following relations are simply induced by the evaluation relation, and are insufficient  in the case of this higher-order functional language.

\subsection{Behavioural Equivalence}
\label{subsec:behavioral}

Naive program equivalence (simple behavioural equivalence by the textbook~\cite{pierce2010software} definition) says two expressions to be equivalent if and only if in every starting configuration they evaluate to the same \emph{result}, or they both diverge. Note that the equivalence relation is typically defined by the symmetrisation of a preorder relation.

\begin{definition}[Naive behavioural equivalence]
\begin{flalign*}
&e_1 \behpre e_2 := \forall v: \rewritesstar{\idfs}{e_1}{v} \implies \rewritesstar{\idfs}{e_2}{v} \\
&e_1 \behequiv e_2 := e_1 \behpre e_2 \land e_2 \behpre e_1
\end{flalign*}
\end{definition}

In the formalisation, we prove the $\behequiv$ relation to be an equivalence, that is, show that it is reflexive, symmetric, and transitive. We also show that it is indeed a behavioural equivalence over our expressions, characterised by congruence~\cite{pierce2010software} property. Congruence helps prove compound expressions equivalent, but it took significant effort to formalise and prove in Coq~\cite{coreerlang}.

\paragraph{Equivalence of function expressions}

The definition of behavioural equivalence checks for equality of result values. In case of integer or list expressions, it is likely to relate expressions understood equivalent. However, for function expressions, checking strict equality is likely not to meet our expectations: 
\begin{flalign*}
	\mathtt{fun}\ f/1(\text{X}) \rightarrow \text{X} + 2 = \mathtt{fun}\ f/1(\text{X}) \rightarrow (\text{X} + 1) + 1 
\end{flalign*}

Obviously, we would expect functions with equivalent bodies to be equivalent, but as exemplified by the above snippets, the naive approach only relates function closures whose body expressions are structurally equal. With this naive definition, we can only prove identical function expressions equivalent.

\subsection{Naive Contextual Equivalence}
\label{subsec:ctx}

In case of function expressions, the equivalence relation should depend on whether the body expressions behave the same way in the same \emph{expression contexts}, i.e. they are contextually equivalent. In other words, contexts are supposed to reveal any observable differences between the expressions.

We proceed by defining expression contexts, which are basically expressions with one of their subexpressions replaced by a \emph{hole}\footnote{We only consider linear (also called single-hole) contexts.}. The hole can be substituted by any expression (e.g. $\subst{\mathtt{apply}\ f(1, \Box, 3)}{2} = \mathtt{apply}\ f(1, 2, 3)$) to obtain valid expressions. We define expression contexts as follows:

\begin{definition}[Expression context]
\begin{flalign*}
C ::=&\ \Box \hda \fun{f/k}{x_1}{x_k}{C} \hda \elet{x}{C}{e_2} \hda \elet{x}{e_1}{C}  \\
&\hda \apply{C}{e_1}{e_k} \hda \apply{e}{C}{e_k} \hda \dots \hda \apply{e}{e_1}{C} \\
& \hda \letrec{f/k}{x_1}{x_k}{C}{e} \hda \letrec{f/k}{x_1}{x_k}{e_0}{C}\\
&\hda C + e_2 \hda e_1 + C \hda [C|e_2] \hda [e_1|C] \hda \case{C}{p}{e_2}{e_3} \\
&\hda \case{e_1}{p}{C}{e_3} \hda \case{e_1}{p}{e_2}{C}
\end{flalign*}
\end{definition}

We define the substitution of the $\Box$ by an expression in the usual way (e.g. see~\cite{pitts2000operational}), and we will denote it by $\subst{C}{e}$. Next, we define the contextual equivalence using contextual preorders.

\begin{definition}[Naive contextual equivalence]
\begin{align*}
    &e_1 \ctxpre e_2 := \forall C, v: 
    \rewritesstar{\idfs}{\subst{C}{e_1}}{v} \implies \rewritesstar{\idfs}{\subst{C}{e_2}}{v}\\
    &e_1 \ctxequiv e_2 := e_1 \ctxpre e_2 \land e_2 \ctxpre e_1
\end{align*}
\end{definition}

Let us point out that this definition of naive contextual equivalence cannot overcome the issue with function expressions: even if we consider syntactical contexts instead of arbitrary frame stacks, the value (i.e. the function closure) is checked for equality. Actually, it escalates the problem even further: the latter relation coincides with syntactical equality since it requires the expressions to yield equal values, even in function abstraction contexts. Clearly, a special notion of value equality is needed to treat function expressions properly.

%One of the most common ways to overcome these limitations is to define CIU equivalence and logical relations and prove them equal with the contextual equivalence~\cite{pitts2000operational,wand2018contextual,pitts2010step}. We also note that it is sufficient to prove termination of equivalent expressions instead of the equivalence of their observable results (i.e. their values), we give the proof in \Cref{sec:termEquiv} based on Pitts' guidance~\cite{pitts2000operational}.

In addition, while it is straightforward to disprove expressions contextually equivalent, the proof of equivalence in general is understood to be significantly more complex as it requires induction over contexts. In order to overcome these issues, we seek for equal but differently formulated equivalence relations, which coincide with our intuition and at the same time they ease proving two expressions equivalent.

\section{Precise Program Equivalence Definitions}
\label{sec:equiv}

In order to be able to reason about the correctness of refactorings, an appropriate and precise program equivalence definition is needed for the object language. In this section, we first refine the contextual equivalence relation introduced in the previous section, then we present a number of alternative, equal definitions, such as step-indexed logical relations~\cite{wand2018contextual,pitts1998operational,pitts2000operational} and CIU (``closed instances of uses'') relations.

\subsection{Frame Stack Termination Relation}
\label{sec:term}

As briefly mentioned in \Cref{sec:related}, it is an important result discussed in related work that two expressions can be shown contextually equivalent by proving that they both terminate or both diverge in arbitrary contexts. In other words, it suffices to prove this termination property even when arguing about expression equivalences appearing in the verification of refactoring. Thus, to be able to reason about termination easily, we first formalise an inductive, step-indexed termination relation for our case study language (see \Cref{fig:termrel}). We also introduce any-step termination: $\term{K}{e} := \exists n: \termk{K}{e}{n}$.

\begin{figure}[htb]
    \centering
    \begin{prooftree}
    \infer0{\termk{\idfs}{v}{0}}
    \end{prooftree}
    \hfill
    \begin{prooftree}
    \hypo{\termk{\case{\Box}{p}{e_2}{e_3} :: K}{e}{n}}
    \infer1{\termk{K}{\case{e}{p}{e_2}{e_3}}{1 + n}}
    \end{prooftree}
    \hfill
    \begin{prooftree}
    \hypo{\termk{\Box + e_2 :: K}{e}{n}}
    \infer1{\termk{K}{e + e_2}{1 + n}}
    \end{prooftree}
    
    \vspace{0.25cm}
    
    \begin{prooftree}
    \hypo{\termk{\elet{x}{\Box}{e_2} :: K}{e}{n}}
    \infer1{\termk{K}{\elet{x}{e}{e_2}}{1 + n}}
    \end{prooftree}
    \hfill
    \begin{prooftree}
    \hypo{\termk{K}{\subst{e_2}{\textit{match}(p, v)}}{n}}
    \hypo{\textit{is\_match}(p, v)}
    \infer2{\termk{\case{\Box}{p}{e_2}{e_3} :: K}{v}{1 + n}}
    \end{prooftree}
    
    \vspace{0.25cm}
    
    \begin{prooftree}
    \hypo{\termk{K}{e_3}{n}}
    \hypo{\neg\textit{is\_match}(p, v)}
    \infer2{\termk{\case{\Box}{p}{e_2}{e_3} :: K}{v}{1 + n}}
    \end{prooftree}
    
    \vspace{0.25cm}
    \begin{prooftree}
    \hypo{\termk{[e_1 | \Box ] :: K}{e_2}{n}}
    \infer1{\termk{K}{[e_1 | e_2]}{1 + n}}
    \end{prooftree}
    \hfill
    \begin{prooftree}
    \hypo{\termk{[\Box | v_2 ] :: K}{e_1}{n}}
    \infer1{\termk{[e_1 | \Box ] :: K}{v_2}{1 + n}}
    \end{prooftree}
    \hfill
    \begin{prooftree}
    \hypo{\termk{K}{[v_1 | v_2]}{n}}
    \infer1{\termk{[\Box | v_2 ] :: K}{v_1}{1 + n}}
    \end{prooftree}
    \vspace{0.25cm}
    
    \begin{prooftree}
    \hypo{\termk{v + \Box :: K}{e_2}{n}}
    \infer1{\termk{\Box + e_2 :: K}{v}{1 + n}}
    \end{prooftree}
    \hfill
    \begin{prooftree}
    \hypo{\termk{K}{l_1 + l_2}{n}}
    \infer1{\termk{l_1 + \Box :: K}{l_2}{1 + n}}
    \end{prooftree}
    \hfill
    \begin{prooftree}
    \hypo{\termk{K}{e_2[x \mapsto v]}{n}}
    \infer1{\termk{\elet{x}{\Box}{e_2}:: K}{v}{1 + n}}
    \end{prooftree}
    
    \vspace{0.25cm}
    
    \begin{prooftree}
    \hypo{\termk{K}{\subst{e}{f/k \mapsto \fun{f/k}{x_1}{x_k}{e_0}}}{n}}
    \infer1{\termk{K}{\letrec{f/k}{x_1}{x_k}{e_0}{e}}{1 + n}}
    \end{prooftree}
    
    \vspace{0.25cm}
    
    \begin{prooftree}
   	\hypo{\termk{\apply{\Box}{e_1}{e_k} :: K}{e}{n}}
   	\infer1{\termk{K}{\apply{e}{e_1}{e_k})}{1 + n}}
    \end{prooftree}
	\hfill
    \begin{prooftree}
    \hypo{\termk{\apply{v}{\Box}{e_k}) :: K}{e_1}{n}}
    \infer1{\termk{\apply{\Box}{e_1}{e_k}) :: K}{v}{1 + n}}
    \end{prooftree}
    
    \vspace{0.25cm}
    For empty parameter list, an additional rule need to be introduced:
    \vspace{0.25cm}
    
    \begin{prooftree}
    \hypo{\termk{K}{\subst{b}{f/0 \mapsto \funz{f/0}{e_0}}}{n}}
    \infer1{\termk{\applyz{\Box} :: K}{\funz{f/0}{e_0}}{1 + n}}
    \end{prooftree}
    
    \vspace{0.25cm}
    
    \begin{prooftree}
    \hypo{\termk{\mathtt{apply}\ v(v_1, \dots, v_i, \Box, e_{i+2}, \dots, e_k) :: K}{e_{i+1}}{n}}
    \infer1{\termk{\mathtt{apply}\ v(v_1, \dots, v_{i-1}, \Box, e_{i+1}, \dots e_k) :: K}{v_i}{1 + n}}
    \end{prooftree}
    
    \vspace{0.25cm}
    
    \begin{prooftree}
    \hypo{\termk{K}{\subst{b}{f/k \mapsto \fun{f/k}{x_1}{x_k}{e_0}, x_1 \mapsto v_1, \dots, x_k \mapsto v_k}}{n}}
    \infer1{\termk{\apply{(\fun{f/k}{x_1}{x_k}{e_0})}{v_1}{\Box} :: K}{v_k}{1 + n}}
    \end{prooftree}
    
    \caption{Step-indexed, frame stack-style termination relation}
    \label{fig:termrel}
\end{figure}

We can prove the following correspondences between step-indexed evaluation and the termination relation.

\begin{lemma}[Step-indexed terminations coincide]\label{thm:semtermEqTerm}
\begin{flalign*}
	&\termk{K}{e}{n} \iff \exists v: \rewritesn{K}{e}{n}{\idfs}{v}
\end{flalign*}
\end{lemma}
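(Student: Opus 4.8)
The plan is to prove both directions of the equivalence by induction on the step index $n$ (equivalently, by induction on the derivation of the relation on the left-hand side in each direction), observing that the termination relation in \Cref{fig:termrel} and the step relation in \Cref{fig:step} have been engineered to march in lockstep: every rule deriving $\termk{K}{e}{1+n}$ from $\termk{K'}{e'}{n}$ corresponds to exactly one one-step reduction $\rewrites{K}{e}{K'}{e'}$, and the axiom $\termk{\idfs}{v}{0}$ corresponds to the reflexivity axiom $\rewritesn{K}{e}{0}{K}{e}$ instantiated with an empty stack and a value. So the real content is just a careful case analysis verifying this correspondence holds rule by rule.

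For the forward direction ($\Rightarrow$), I would induct on the derivation of $\termk{K}{e}{n}$. In the base case $n = 0$ we must have $e$ a value $v$ and $K = \idfs$, so $\rewritesn{\idfs}{v}{0}{\idfs}{v}$ witnesses the claim. For each inductive rule, the premise gives $\termk{K'}{e'}{n}$ for the appropriate $K', e'$; the induction hypothesis yields some $v$ with $\rewritesn{K'}{e'}{n}{\idfs}{v}$; and I prepend the matching single reduction step $\rewrites{K}{e}{K'}{e'}$ using the transitivity rule for $\longrightarrow^{1+n}$ to obtain $\rewritesn{K}{e}{1+n}{\idfs}{v}$ with the same $v$. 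The only mild subtlety is the rules involving side conditions ($\textit{is\_match}(p,v)$ vs.\ its negation, the arithmetic rule requiring both operands to be literals $l_1, l_2$, and the arity-dependent $\mathtt{apply}$ rules with $i < k$ or empty parameter lists): in each such case the side condition needed by the reduction rule is exactly the one carried by the corresponding termination rule, so the step is still immediate.

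For the backward direction ($\Leftarrow$), suppose $\rewritesn{K}{e}{n}{\idfs}{v}$. I would induct on $n$ (i.e.\ on the derivation of the $\longrightarrow^n$ judgement). If $n = 0$, then $\langle K, e\rangle = \langle \idfs, v\rangle$, so $e$ is the value $v$ and $K = \idfs$, and $\termk{\idfs}{v}{0}$ holds by the axiom. If $n = 1 + m$, the derivation decomposes as $\rewrites{K}{e}{K'}{e'}$ followed by $\rewritesn{K'}{e'}{m}{\idfs}{v}$; the induction hypothesis gives $\termk{K'}{e'}{m}$, and I then invert the single-step relation to determine which reduction rule was used and apply the corresponding termination rule to conclude $\termk{K}{e}{1+m}$. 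This inversion is routine because the left-hand side $\langle K, e\rangle$ of a reduction rule determines the rule (up to the disjoint side conditions), so there is always exactly one termination rule to apply.

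The main obstacle, such as it is, is bookkeeping rather than mathematical depth: one must enumerate all the reduction/termination rule pairs — including the several $\mathtt{apply}$ rules that differ by whether the parameter list is empty, whether we are still collecting arguments ($i < k$) or have collected the last one, and whether the function is given directly as a $\mathtt{fun}$ closure — and check the correspondence in each case, being careful that the continuation stack $K'$ and redex $e'$ produced by a termination-rule premise coincide syntactically with those produced by the matching reduction step (in particular that the substitutions $\subst{e}{f/k \mapsto \dots}$, $\subst{e_2}{\textit{match}(p,v)}$, etc.\ are literally the same terms on both sides). Since the two figures were designed in parallel, this is straightforward but tedious; in the Coq development it is presumably discharged by a shared case analysis with heavy automation.
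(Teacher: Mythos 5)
Your proposal is correct and follows essentially the same route as the paper, which likewise argues by induction on the step index and defers the rule-by-rule case analysis to the Coq development. The extra detail you supply — the observation that each termination rule corresponds to exactly one single-step reduction, and the treatment of the side conditions — is exactly the bookkeeping the paper's proof sketch leaves implicit.
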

\begin{proof}
This lemma can be proven by induction on the step-index. The subgoal can be solved by basically applying the induction hypotheses for the subexpressions. For more details, we refer to the Coq implementation~\cite{coreerlangmini}.
\end{proof}

\begin{corollary}[Terminations coincide]
	\begin{flalign*}
		&\term{K}{e} \iff \exists v: \rewritesstar{K}{e}{v}
	\end{flalign*}
\end{corollary}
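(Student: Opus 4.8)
The plan is to obtain the corollary as a direct consequence of \Cref{thm:semtermEqTerm} by quantifying existentially over the step-index. Both sides of the stated equivalence are, by their very definitions, the existential closures over $n$ of the two sides of the lemma: the left-hand side $\term{K}{e}$ unfolds to $\exists n: \termk{K}{e}{n}$, while the right-hand side $\exists v: \rewritesstar{K}{e}{v}$ unfolds to $\exists v: \exists n: \rewritesn{K}{e}{n}{\idfs}{v}$. So the essential content is merely reordering the quantifiers $\exists n$ and $\exists v$ and appealing to the lemma underneath them.

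Concretely, I would argue both directions. For the forward direction, assume $\term{K}{e}$, i.e. there is some $n$ with $\termk{K}{e}{n}$; apply the left-to-right part of \Cref{thm:semtermEqTerm} to obtain a $v$ with $\rewritesn{K}{e}{n}{\idfs}{v}$; then that same $n$ and $v$ witness $\rewritesstar{K}{e}{v}$, hence $\exists v: \rewritesstar{K}{e}{v}$. For the backward direction, assume there is a $v$ with $\rewritesstar{K}{e}{v}$, i.e. some $n$ with $\rewritesn{K}{e}{n}{\idfs}{v}$; apply the right-to-left part of the lemma with this $v$ to get $\termk{K}{e}{n}$, so $\exists n: \termk{K}{e}{n}$, which is exactly $\term{K}{e}$.

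I do not expect any genuine obstacle here: the corollary is purely a matter of unfolding the any-step abbreviations and commuting existential quantifiers around the biconditional of \Cref{thm:semtermEqTerm}. The only point that deserves a moment's care is the placement of the $\exists v$ — it must be pulled outside the $\exists n$ in the statement of the corollary, whereas in the lemma the roles are such that $v$ is produced (or consumed) for a fixed $n$ — but since both quantifiers are existential this commutation is unconditionally valid, and no side conditions on scoping or determinacy are needed. In a proof assistant this is a two-line script destructing the existentials and reusing \Cref{thm:semtermEqTerm}.
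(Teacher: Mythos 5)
Your proof is correct and matches the paper's intent exactly: the paper states this as an immediate corollary of \Cref{thm:semtermEqTerm} with no written proof, and your argument — unfolding the any-step abbreviations and commuting the existentials over the step-indexed biconditional — is precisely the omitted reasoning.
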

%\begin{proof}
%This theorem is just a consequence of \Cref{thm:semtermEqTerm}.
%\end{proof}

We have investigated several properties of the semantics relevant to the termination relation; here we highlight some of these, the rest can be found in the Coq formalisation. In the following two lemmas, we reuse the notation $\subst{F}{e}$ to substitute the $\Box$ in frame $F$ by $e$, the operation $::$ prepends a frame to a stack, and $\concat$ is concatenation between frame stacks.

\begin{lemma}[Remove frame]\label{thm:putback} For any closed frame $F$,
\begin{flalign*}
	&\term{F :: K}{e} \implies \term{K}{\subst{F}{e}}
\end{flalign*}
\end{lemma}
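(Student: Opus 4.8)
The plan is to proceed by case analysis on the shape of the (closed) frame $F$, exploiting the fact that $\subst{F}{e}$ is always a compound expression whose evaluation is forced, in a bounded and fully deterministic way, to first rebuild the configuration $\langle F :: K, e \rangle$ before it ever touches $e$. Concretely, for each frame shape I would establish that $\langle K, \subst{F}{e} \rangle \longrightarrow^{m} \langle F :: K, e \rangle$ for some small $m$ depending only on $F$ (in the worst case the arity of an application), where none of these $m$ steps inspects or reduces $e$ itself. Combining this deterministic prefix with the hypothesis $\term{F :: K}{e}$ — which by \Cref{thm:semtermEqTerm} and its corollary means $\langle F :: K, e \rangle \longrightarrow^{*} \langle \idfs, v \rangle$ for some value $v$ — yields $\langle K, \subst{F}{e} \rangle \longrightarrow^{*} \langle \idfs, v \rangle$, that is $\term{K}{\subst{F}{e}}$, again via \Cref{thm:semtermEqTerm}. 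Note that reading the termination rules of \Cref{fig:termrel} backwards starting from $e$ does not work directly, because the shape of the derivation of $\termk{F :: K}{e}{n}$ depends on $e$; routing everything through the reduction relation avoids this.

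For the frames $\Box + e_2$, $\elet{x}{\Box}{e_2}$, $\case{\Box}{p}{e_2}{e_3}$, $\apply{\Box}{e_1}{e_k}$ (including $\applyz{\Box}$), and $[e_1 | \Box]$, we have $m = 1$: a single congruence rule from \Cref{fig:step} rewrites $\langle K, \subst{F}{e} \rangle$ to $\langle F :: K, e \rangle$. For $v_1 + \Box$ and $[\Box | v_2]$ we have $m = 2$: the first step pushes the frame for the still-pending subexpression and the second step hands control back to $e$. This is exactly where the well-formedness of $F$ is used, since it guarantees that the left component ($v_1$, resp. $v_2$) is already a value, so that the appropriate ``value'' reduction rule applies.

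The only genuinely iterative case is the family of application frames $\mathtt{apply}\ v_0(v_1, \dots, v_{i-1}, \Box, e_{i+1}, \dots, e_k)$, for which $\subst{F}{e} = \mathtt{apply}\ v_0(v_1, \dots, v_{i-1}, e, e_{i+1}, \dots, e_k)$ and the evaluator must first reduce the callee position (already the value $v_0$) and then step through the $i-1$ already-evaluated argument positions, so $m = i + 1$. I would discharge this by a short auxiliary induction on $i$ (in the nameless formalisation, an induction on the prefix list of already-evaluated arguments), at each step using that $v_0, \dots, v_{i-1}$ are values because $F$ is a frame. I expect this application case — in particular threading the induction cleanly through the uncurried argument list while keeping the step indices aligned — to be the only non-routine part of the proof; all other cases are immediate one- or two-step reductions, and the final assembly is just transitivity of $\longrightarrow^{*}$ together with \Cref{thm:semtermEqTerm}.
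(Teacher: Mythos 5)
Your proposal is correct and follows essentially the same route as the paper's proof: a case distinction on the closed frame $F$, showing that $\langle K, \subst{F}{e}\rangle$ deterministically reaches the configuration $\langle F :: K, e\rangle$ in a small number of steps depending only on $F$ (one for $\Box + e_2$, two for $v_1 + \Box$, etc.), and then appending the assumed terminating computation. Your extra detail on the uncurried application frames (the auxiliary induction over the $i-1$ already-evaluated arguments) and the explicit routing through \Cref{thm:semtermEqTerm} are just finer-grained elaborations of what the paper leaves implicit.
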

\begin{proof}
From $\term{F :: K}{e}$ we assume that this termination takes $k$ steps. We proceed with case distinction on frame $F$: depending on the structure of $F$, the evaluation of $\subst{F}{e}$ should take $n + k$ steps (e.g. for $F = \Box + e_2$, $n = 1$, while for $F = v_1 + \Box$, $n = 2$, etc.) to reach the configuration in the premise ($\termk{F :: K}{e}{k}$).
\end{proof}

\begin{lemma}[Add frame]\label{thm:putbackrev} For any closed frame $F$, closed expression $e$,
\begin{flalign*}
	&\term{K}{\subst{F}{e}} \implies \term{F :: K}{e}
\end{flalign*}
\end{lemma}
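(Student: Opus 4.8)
The plan is to reverse the reasoning of the preceding lemma (\Cref{thm:putback}). We are given $\term{K}{\subst{F}{e}}$ for a closed frame $F$ and closed expression $e$, and we must produce $\term{F :: K}{e}$, i.e.\ exhibit some step count $m$ with $\termk{F :: K}{e}{m}$. First I would unfold the hypothesis to get $\termk{K}{\subst{F}{e}}{k}$ for some $k$, and then proceed by case distinction on the shape of the frame $F$, exactly as in \Cref{thm:putback}. For each frame shape the term $\subst{F}{e}$ is a specific compound expression (e.g.\ $e + e_2$ when $F = \Box + e_2$, or $\apply{v_0}{v_1}{e}$ reshaped appropriately when $F$ is a partially-applied call frame, or $[e_1 \mid e]$ when $F = [e_1 \mid \Box]$, etc.), and I would read off which termination rule(s) of \Cref{fig:termrel} must have been used to derive $\termk{K}{\subst{F}{e}}{k}$.

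The key step is, in each case, to invert that derivation. Take $F = \Box + e_2$: then $\subst{F}{e} = e + e_2$, and the only termination rule matching $\termk{K}{e + e_2}{k}$ forces $k = 1 + n$ with $\termk{\Box + e_2 :: K}{e}{n}$ as a subderivation; but $\Box + e_2 :: K = F :: K$, so we are done with $m = n$. For frames whose reconstituted expression requires more than one reduction to expose the frame (for instance $F = v_1 + \Box$, where $\subst{F}{e} = v_1 + e$ and two rule applications — pushing $\Box + e$ then moving to $v_1 + \Box$ — are needed), the inversion strips off the corresponding fixed number of rule layers, again yielding a subderivation of the form $\termk{F :: K}{e}{n}$, and we set $m = n$ plus that fixed offset. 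Closedness of $F$ and $e$ guarantees that the intermediate configurations are well-formed closed configurations, so the rules genuinely apply.

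The main obstacle is the application frames with an argument list, i.e.\ $F = \apply{v_0}{v_1, \dots, v_{i-1}}{\Box, e_{i+1}, \dots, e_k}$ (and the corner cases $F = \apply{\Box}{e_1}{e_k}$ and the zero-argument frame): here $\subst{F}{e}$ is not literally a syntactic subterm occurrence but a call expression with $e$ reinserted in the $i$-th argument position, and one must carefully match this against the call-evaluation rules of \Cref{fig:termrel} — in particular the rule that advances the focus from argument $v_i$ to $e_{i+1}$ — to recover the subderivation $\termk{F :: K}{e}{n}$. I expect the bookkeeping of which argument is in focus, and confirming that the required rule instance is exactly the one appearing in the inverted derivation, to be where the care is needed; everything else is a mechanical case analysis. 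For the full details of these cases we refer to the Coq formalisation~\cite{coreerlangmini}.
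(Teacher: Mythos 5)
Your proposal matches the paper's proof: the paper likewise argues by case distinction on the closed frame $F$, inverts the derivation of $\term{K}{\subst{F}{e}}$, and peels off the fixed number of rule applications determined by the shape of $F$ (one for $\Box + e_2$, two for $v_1 + \Box$, and so on) to land on the required configuration $\termk{F :: K}{e}{n}$. Your additional care about the multi-argument application frames is exactly the bookkeeping the paper delegates to the Coq formalisation.
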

\begin{proof}
	This proof is the reverse of the previous one. Again, we do case distinction on $F$, and then we inspect the premise $\term{K}{\subst{F}{e}}$ and investigate how this derivation could have been done. After taking some steps (e.g. for $F = \Box + e_2$, one step is enough, while for $F = v_1 + \Box$, two steps are needed, etc.) we reach the configuration with some $k$ number $\termk{F :: K}{e}{k}$ we need to prove to terminate.
\end{proof}

The following lemma is not closely related with termination, but will be important when establishing connection between observational equivalence and other expression equivalence relations.

\begin{lemma}[Extend frame stack]\label{thm:extendframes}
\begin{flalign*}
	&\rewritesn{K_1}{e_1}{n}{K_2}{e_2} \implies \forall K': \rewritesn{K_1 \concat K'}{e_1}{n}{K_2 \concat K'}{e_2}
\end{flalign*}
\end{lemma}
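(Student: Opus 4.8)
The plan is to prove this by induction on the length $n$ of the reduction sequence, equivalently on the derivation of $\rewritesn{K_1}{e_1}{n}{K_2}{e_2}$ built from the two rules at the bottom of \Cref{fig:step}. In the base case $n = 0$ the only applicable rule forces $K_1 = K_2$ and $e_1 = e_2$, hence $K_1 \concat K' = K_2 \concat K'$, and reflexivity gives $\rewritesn{K_1 \concat K'}{e_1}{0}{K_2 \concat K'}{e_2}$ for every $K'$.

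For the inductive step, the derivation decomposes as a single step $\rewrites{K_1}{e_1}{K_0}{e_0}$ followed by $\rewritesn{K_0}{e_0}{m}{K_2}{e_2}$ with $n = 1 + m$. Applying the induction hypothesis to the tail yields $\rewritesn{K_0 \concat K'}{e_0}{m}{K_2 \concat K'}{e_2}$, so it only remains to show the one-step version of the statement: $\rewrites{K_1}{e_1}{K_0}{e_0}$ implies $\rewrites{K_1 \concat K'}{e_1}{K_0 \concat K'}{e_0}$. Prepending that step to the tail supplied by the hypothesis then closes the case.

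The one-step claim I would establish by case analysis on the rule of \Cref{fig:step} used to justify $\rewrites{K_1}{e_1}{K_0}{e_0}$. Each such rule inspects only the current expression together with (at most) the topmost frame of the stack, and it rewrites the stack by an operation that touches only its top: leaving it unchanged (e.g. the \texttt{rec} unfolding rule), pushing one new frame, popping the top frame, or replacing the top frame. Because $\concat$ is associative and $(F :: K) \concat K' = F :: (K \concat K')$, appending $K'$ commutes with every one of these operations; consequently the identical rule instance applies to $\langle K_1 \concat K',\, e_1\rangle$ and produces exactly $\langle K_0 \concat K',\, e_0\rangle$. The side conditions occurring in some rules (such as $\textit{is\_match}(p,v)$, $\neg\textit{is\_match}(p,v)$, $i < k$, or the matching of arities) constrain only the expression and the top frame, so they are unaffected by what lies deeper in the stack.

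The argument is essentially bookkeeping, and the only mild obstacle is the number of one-step rules to walk through while keeping the interleaving of $::$ and $\concat$ straight across the several arithmetic-, list-, \texttt{let}- and \texttt{apply}-frame cases. No idea beyond the associativity observation above is required.
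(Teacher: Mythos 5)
Your proposal is correct and matches the paper's proof: both proceed by induction on the length $n$ of the derivation, handling $n=0$ by reflexivity and the successor case by replaying the same one-step rule (which only manipulates the top of the stack) in the extended stack before invoking the induction hypothesis. Your write-up simply makes the case analysis and the associativity bookkeeping more explicit than the paper does.
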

\begin{proof}
	We carry out induction on the length of the derivation ($n$).
	\begin{itemize}
		\item For $n = 0$, from $\rewritesn{K_1}{e_1}{0}{K_2}{e_2}$ we acquire $K_1 = K_2$. Thereafter, we use definition for $\longrightarrow^0$.
		\item For $n = 1 + n'$, we inspect the possible derivations of $\rewritesn{K_1}{e_1}{1 + n'}{K_2}{e_2}$, and just take the same step in the conclusion together with the induction hypothesis (if necessary).
	\end{itemize}
\end{proof}

\subsection{The Logical Relation}
\label{sec:logrel}

The majority of related work on program equivalence proposes logical relations, amongst others, for arguing about standard contextual equivalence. At first we followed the footsteps of Pitts~\cite{pitts2000operational} and adapted his ``logical simulation relation''. Unfortunately, their mathematical definitions cannot be directly formalised in Coq (for an untyped language) as they use statements that are not strictly positive and therefore do not pass Coq's positivity checker. Neither could we use \emph{types} like Culpepper and Cobb did~\cite{culpepper2017contextual} as our language is untyped. Finally, we decided to adopt the idea of step-indexed relations~\cite{wand2018contextual,pitts2010step}.

First, we define these logical relations for closed values, expressions and frame-stacks. For better readability, we omit the assumptions of closedness from the definitions. We invite the reader to observe how the value relation addresses the previously seen issue of function expression equivalence by relating the body expressions.

\begin{definition}[Logical relations for closed expressions, values and frame stacks]\label{def:logrelclosed}
\begin{gather*}
    (l_1, l_2) \in \mathbb{V}_n \iff l_1 = l_2\\
    ([], []) \in \mathbb{V}_n\iff \textit{true}\\
    (\fun{f/k}{x_1}{x_k}{e}, \fun{f/k}{x_1}{x_k}{e'}) \in \mathbb{V}_n \iff \\
    \forall m < n: \forall v_1, v_1', \dots, v_k, v_k': (v_1, v_1') \in \mathbb{V}_m \land \dots  \land (v_k, v_k') \in \mathbb{V}_m \implies\\
    (\subst{e}{f/k \mapsto \fun{f/k}{x_1}{x_k}{e}, x_1 \mapsto v_1, \dots, x_k \mapsto v_k},\\
    \subst{e'}{f/k \mapsto \fun{f/k}{x_1}{x_k}{e'}, x_1 \mapsto v_1', \dots, x_k \mapsto v_k'}) \in \mathbb{E}_m\\
    ([v_1 | v_2], [v_1'| v_2']) \in \mathbb{V}_n \iff (v_1, v_1') \in  \mathbb{V}_n \land (v_2, v_2') \in  \mathbb{V}_n
\end{gather*}
\begin{gather*}
    (K_1, K_2) \in \mathbb{K}_n \iff
    \forall m \le n, v_1, v_2: (v_1, v_2) \in \mathbb{V}_m \implies
    \\ \termk{K_1}{v_1}{m} \implies \term{K_2}{v_2} \\
    (e_1, e_2) \in \mathbb{E}_n \iff
    \forall m \le n, K_1, K_2: (K_1, K_2) \in \mathbb{K}_m \implies
    \\ \termk{K_1}{e_1}{m} \implies \term{K_2}{e_2}
\end{gather*}
\end{definition}

To ensure the well-foundedness of these relations, we used the step-index, which is decreased in $\mathbb{V}$ for functions. For list values on the other hand, we did not decrease this index, we only used structural recursion. Alternatively, the step-index can be reduced in this case too, but then the mechanism of pattern matching needs to be formalised in a step-indexed way too.

Just like in the work of Wand et al.~\cite{wand2018contextual}, these relations with higher indices can differentiate more expressions, values and stacks, i.e. $\mathbb{V}_0 \subseteq \mathbb{V}_1 \subseteq \dots \subseteq \mathbb{V}_{n-1} \subseteq \mathbb{V}_n$ (also for $\mathbb{E}_n$ and $\mathbb{K}_n$).
The above relations can be generalised to open expressions (or values) with closing substitutions (i.e. all free variables of the expression replaced by closed values).

\begin{definition}[Logical relations with closing substitutions]\label{def:logrelopen}
\begin{flalign*}
(\sub_1, \sub_2) \in \mathbb{G}^\Gamma_n \iff & \subscoped{\Gamma}{\sub_1}{\emptyset} \land \subscoped{\Gamma}{\sub_2}{\emptyset} \land
\forall x \in \Gamma: (\sub_1(x), \sub_2(x)) \in \mathbb{V}_n\\
(v_1, v_2) \in \mathbb{V}^\Gamma \iff & \valscoped{\Gamma}{v_1} \land \valscoped{\Gamma}{v_2} \land \forall n, \sub_1, \sub_2: \\
&(\sub_1, \sub_2) \in \mathbb{G}^\Gamma_n \implies (v_1[\sub_1], v_2[\sub_2]) \in \mathbb{V}_n\\
(e_1, e_2) \in \mathbb{E}^\Gamma \iff & \expscoped{\Gamma}{e_1} \land \expscoped{\Gamma}{e_2} \land\forall n, \sub_1, \sub_2: \\
&(\sub_1, \sub_2) \in \mathbb{G}^\Gamma_n \implies (e_1[\sub_1], e_2[\sub_2]) \in \mathbb{E}_n
\end{flalign*}
\end{definition}

After having these relations defined, we proceeded to prove the two most important properties~\cite{culpepper2017contextual,wand2018contextual,pitts2010step} of them: the ``fundamental property'' (a form of reflexivity) and the compatibility rules which are forms of congruence. In our formalisation, we state and prove a number of lemmas that support the proof of the main theorems (we refer to the implementation~\cite{coreerlangmini} for more details).

\begin{theorem}[Compatibility rules]\label{thm:compat}\ \\ \normalfont
    \begin{center}
    \begin{prooftree}
    \hypo{(v, v') \in \mathbb{V}^\Gamma}
    \infer1{(v, v') \in \mathbb{E}^\Gamma}
    \end{prooftree}
    \hfill
    \begin{prooftree}
    \hypo{x \in \Gamma}
    \infer1{(x, x) \in \mathbb{V}^\Gamma}
    \end{prooftree}
    \hfill
    \begin{prooftree}
    \hypo{f/k \in \Gamma}
    \infer1{(f/k, f/k) \in \mathbb{V}^\Gamma}
    \end{prooftree}
    \hfill
    \begin{prooftree}
    \infer0{(l, l) \in \mathbb{V}^\Gamma}
    \end{prooftree}
    \hfill
    \begin{prooftree}
        \infer0{([], []) \in \mathbb{V}^\Gamma}
    \end{prooftree}
    
    \vspace{0.3cm}
    
    \begin{prooftree}
    \hypo{(e_1, e_2) \in \mathbb{E}^{\Gamma \cup \{f/k, x_1, \dots, x_k\}}}
    \infer1{(\fun{f/k}{x_1}{x_k}{e}, \fun{f/k}{x_1}{x_k}{e_2}) \in \mathbb{V}^\Gamma}
    \end{prooftree}
    
    \vspace{0.3cm}

    \begin{prooftree}
    \hypo{(e_1, e_1') \in \mathbb{E}^\Gamma}
    \hypo{(e_2, e_2') \in \mathbb{E}^\Gamma}
    \infer2{(e_1 + e_2, e_1' + e_2') \in \mathbb{E}^\Gamma}
    \end{prooftree}
    \hfill
    \begin{prooftree}
    \hypo{(e_1, e_1') \in \mathbb{E}^\Gamma}
    \hypo{(e_2, e_2') \in \mathbb{E}^{\Gamma \cup \{x\}}}
    \infer2{(\elet{x}{e_1}{e_2}, \elet{x}{e_1'}{e_2'}) \in \mathbb{E}^\Gamma}
    \end{prooftree}
    
    \vspace{0.3cm}
    
    \begin{prooftree}
    \hypo{(e, e') \in \mathbb{E}^\Gamma}
    \hypo{(e_1, e_1') \in \mathbb{E}^\Gamma}
    \hypo{\cdots}
    \hypo{(e_k, e_k') \in \mathbb{E}^\Gamma}
    \infer4{(\apply{e}{e_1}{e_k}, \apply{e'}{e_1'}{e_k'}) \in \mathbb{E}^\Gamma}
    \end{prooftree}
    
    \vspace{0.3cm}
    
    \begin{prooftree}
    \hypo{(e_1, e_1') \in \mathbb{E}^\Gamma}
    \hypo{(e_2, e_2') \in \mathbb{E}^\Gamma}
    \infer2{([e_1 | e_2], [e_1' | e_2']) \in \mathbb{E}^\Gamma}
    \end{prooftree}
    
    \vspace{0.3cm}
    
    \begin{prooftree}
    \hypo{(e, e') \in \mathbb{E}^{\Gamma \cup \{f/k\}}}
    \hypo{(b, b') \in \mathbb{E}^{\Gamma \cup \{f/k, x_1, \dots, x_k\}}}
    \infer2{(\letrec{f/k}{x_1}{x_k}{b}{e}, \letrec{f/k}{x_1}{x_k}{b'}{e'}) \in \mathbb{E}^\Gamma}
    \end{prooftree}
    
    \vspace{0.3cm}
    
    \begin{prooftree}
    \hypo{(e_1, e_1') \in \mathbb{E}^\Gamma}
    \hypo{(e_2, e_2') \in \mathbb{E}^{\Gamma \cup \textit{vars}(p)}}
    \hypo{(e_3, e_3') \in \mathbb{E}^\Gamma}
    \infer3{(\case{e_1}{p}{e_2}{e_3}, \case{e_1'}{p}{e_2'}{e_3'}) \in \mathbb{E}^\Gamma}
    \end{prooftree}
    \end{center}
\end{theorem}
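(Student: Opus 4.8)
The plan is to prove every rule in two stages. First I would establish, for each constructor, a \emph{closed} analogue of the rule, stated over the step-indexed relations $\mathbb{V}_n,\mathbb{E}_n,\mathbb{K}_n$ of \Cref{def:logrelclosed}; then I would lift it to the open relations $\mathbb{V}^\Gamma,\mathbb{E}^\Gamma$ of \Cref{def:logrelopen} by unfolding the definition, fixing $(\sub_1,\sub_2)\in\mathbb{G}^\Gamma_n$, pushing the closing substitutions inward through the term constructor (using the defining equations for substitution application, and discharging the scoping side conditions from the compatibility premises together with \Cref{thm:substPres}), and finally appealing to the closed rule. The ``leaf'' rules come out immediately: substitution fixes $l$ and $[]$ and $(l,l),([],[])\in\mathbb{V}_n$ hold by definition, while for $x$ (resp.\ $f/k$) we have $\subst{x}{\sub_i}=\sub_i(x)$ and $(\sub_1(x),\sub_2(x))\in\mathbb{V}_n$ is exactly the last conjunct of $(\sub_1,\sub_2)\in\mathbb{G}^\Gamma_n$. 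The rule $(v,v')\in\mathbb{V}^\Gamma\Rightarrow(v,v')\in\mathbb{E}^\Gamma$ reduces to the closed inclusion $\mathbb{V}_n\subseteq\mathbb{E}_n$ (every pair related as values is related as expressions): given $(K_1,K_2)\in\mathbb{K}_m$ with $m\le n$ and $\termk{K_1}{v}{m}$, instantiate the $\mathbb{K}_m$-hypothesis at step $m$ with $(v,v')$, which lies in $\mathbb{V}_m$ by the monotonicity of $\mathbb{V}$ in the index noted earlier.

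For the proper constructors I would isolate once a reusable \emph{frame-compatibility} lemma: whenever $(K_1,K_2)\in\mathbb{K}_j$ and the expressions and values filling a frame $F$ (producing $F_1$ on the left, $F_2$ on the right) are pairwise related in $\mathbb{E}_j$ or $\mathbb{V}_j$, then $F_1::K_1$ and $F_2::K_2$ are related in $\mathbb{K}_j$. Its proof is a case analysis on the shape of $F$: in each case one inverts the relevant termination rule of \Cref{fig:termrel} by a single step, performs the matching step on the right (via the add/remove-frame lemmas \Cref{thm:putback} and \Cref{thm:putbackrev}, or the corresponding one-step (co)inversions of the fixed-index termination relation), and then either recurses into the next frame-compatibility instance (for the chains of $\mathtt{apply}$ frames, for $[e_1|\Box]$ / $[\Box|v_2]$, and for $\Box+e_2$ / $v_1+\Box$) or hands control to the $\mathbb{E}$/$\mathbb{V}$-membership of the remaining subterm. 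With this lemma each constructor case is uniform: to show that the compound expression built from the left-hand subterms and the one built from the right-hand subterms are related in $\mathbb{E}_n$, take $(K_1,K_2)\in\mathbb{K}_m$ with $m\le n$ and a terminating derivation of length $m$; invert the first termination rule to peel off one frame (for $\mathtt{rec}$, to perform the unfolding substitution instead); extend $K_1,K_2$ by frame-compatibility; and feed the derivation, now of strictly smaller length, into the first subexpression's membership in $\mathbb{E}_n$ (legitimate after monotonicity), concluding that the right-hand side terminates. The $\mathtt{rec}$ case in particular collapses, through its unfolding rule, to an application of the $\mathtt{fun}$ case to the recursive body.

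The two cases that will need real work are $\mathtt{case}$ and function application. For $\mathtt{case}$ I would first prove a matching lemma: if $(v,v')\in\mathbb{V}_n$ then $\textit{is\_match}(p,v)=\textit{is\_match}(p,v')$, and when both match, $(\textit{match}(p,v),\textit{match}(p,v'))\in\mathbb{G}^{\textit{vars}(p)}_n$; this goes by induction on $p$ and critically uses the design choice that list values are related at the \emph{same} index $n$ (rather than a decremented one), so the componentwise relations needed for the inductive step on $[\,p_1\mid p_2\,]$ are available — the point flagged in the remark after \Cref{def:logrelclosed}. Given it, the $\case{\Box}{p}{e_2}{e_3}$ frame splits along the two termination rules and feeds either $(e_2,e_2')$ (after extending the substitutions by the related match substitutions, keeping them in $\mathbb{G}$ via \Cref{thm:extsubst}) or $(e_3,e_3')$. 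For application the obstacle is the step-index bookkeeping around the self-reference of recursive closures: when two related values reach a frame $\mathtt{apply}\ \hat v(v_1,\dots,v_{k-1},\Box)::K_1$ with $(\hat v,\hat v')\in\mathbb{V}_l$, the structure of $\mathbb{V}$ forces both to be closures of the same arity, and the function clause of \Cref{def:logrelclosed} yields the related \emph{bodies} only at some $l'<l$; one must then recognise the terms produced by the $\beta$-rule of \Cref{fig:termrel} as closing-substitution instances of those bodies, under a substitution that extends the one from the function clause by mapping $f/k$ to the closures \emph{themselves}. Showing this extended pair of substitutions lies in $\mathbb{G}^{\cdots}_{l'}$ again requires the two closures to be related at $l'$, and since $l'<l$ this is exactly what a strong induction on the step index supplies. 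For that reason I would run the closed-level argument (at least the $\mathtt{fun}$ and $\mathtt{rec}$ clauses) by well-founded induction on $n$, and I expect the interplay of the step index, the de Bruijn substitution-composition identities, and the recursive closure's reference to itself to be the main difficulty of the whole theorem.
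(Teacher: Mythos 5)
Your plan is correct and follows essentially the same route as the paper's proof: establish the rules for the closed step-indexed relations first and lift them through closing substitutions, peel frames off the termination derivation while building up the required $\mathbb{K}$-memberships from the subexpression premises (exactly the paper's worked example for addition), and handle the recursive-closure self-reference in the $\mathtt{fun}$/$\mathtt{apply}$ clauses by strong induction on the step index. Your explicitly factored frame-compatibility and pattern-matching lemmas are just named versions of steps the paper performs inline or leaves implicit.
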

\begin{proof}
	The compatibility rules for the non-recursive language constructs follow from the definitions. For recursive functions, induction was needed by the step-index.
	For the other cases, we give a representative proof outline with the compatibility proof of addition:
	\begin{enumerate}
	\item Give proof for closed expressions, then the compatibility with the closing substitutions is just a consequence of it.
	 	%\begin{prooftree}
    	%\hypo{(e_1, e_1') \in \mathbb{E}_n}
   		%\hypo{(e_2, e_2') \in \mathbb{E}_n}
   		%\infer2{(e_1 + e_2, e_1' + e_2') \in \mathbb{E}_n}
    	%\end{prooftree}
	\item From the premise $\termk{K}{e_1 + e_2}{m}$ (for any $m \le n$), we can deduce $\termk{\Box + e_2 :: K}{e_1}{m - 1}$ from the definition of the termination. For the other derivation, we have $(K, K') \in \mathbb{K}_n$. To get $\term{K'}{e_1' + e_2'}$ it is sufficient to prove $\term{\Box + e_2' :: K'}{e_1'}$. Now, we apply the premise $(e_1, e_1') \in \mathbb{E}_n$, to conclude this sub-proof, but we still need to show that the two frame stacks are in relation.
	\item To prove $(\Box + e_2 :: K, \Box + e_2' :: K') \in \mathbb{K}_m$, we need to prove that for any $k \le m$, $(v_1, v_1') \in \mathbb{V}_m: \termk{\Box + e_2 :: K}{v_1}{k} \implies \term{\Box + e_2' :: K'}{v_1'}$. By definition, we can transform both the premise and the conclusion (just like above): $\termk{v_1 + \Box :: K}{e_2}{k - 1} \implies \term{v_1' + \Box :: K'}{e_2'}$, which can be proven by $(e_2, e_2') \in \mathbb{E}_n$, since $k - 1 < k \le m \le n$. However, this step introduces another premise of the new frame stacks to be in relation.
	\item To prove $(v_1 + \Box :: K, v_1' + \Box :: K') \in \mathbb{K}_m$, we have to go through the same steps as before. We need to prove that $j \le k$, $(v_2, v_2') \in \mathbb{V}_k: \termk{v_1 + \Box :: K}{v_2}{j} \implies \term{v_1' + \Box :: K'}{v_2'}$. Once again, we apply the definition of the termination to get $\termk{K}{v_1 + v_2}{j - 1} \implies \term{K'}{v_1' + v_2'}$. At this point, we can notice, that $v_1$ and $v_2$ have to be integer literals ($l_1, l_2$), otherwise the addition had not been terminated. Moreover, $v_1' = l_1$ and $v_2' = l_2$, otherwise they could not have been in relation ($\mathbb{V}_m$, or $\mathbb{V}_k$ resp.) with their counterparts. Hence, we get the statement $\termk{K}{l_1 + l_2}{j - 1} \implies \term{K'}{l_1 + l_2}$, which is proved by the hypothesis of $(K, K') \in \mathbb{K}_n$.
	\end{enumerate}
	For the complete proof we refer to the formalisation~\cite{coreerlangmini}.
\end{proof}

The fundamental property of the logical relations is just a consequence of the compatibility rules.

\begin{theorem}[Fundamental property]\label{thm:fundamental}\normalfont
    \begin{flalign*}
    \expscoped{\Gamma}{e} \implies& (e, e) \in \mathbb{E}^\Gamma \\
    \valscoped{\Gamma}{v} \implies& (v, v) \in \mathbb{V}^\Gamma \\
    \subscoped{\Gamma}{\sub}{\emptyset} \implies& \forall n: (\sub, \sub) \in \mathbb{G}^\Gamma_n
    \end{flalign*}
\end{theorem}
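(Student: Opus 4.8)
The plan is to derive all three statements from the compatibility rules of \Cref{thm:compat}, treating the theorem as a straightforward structural induction whose inductive steps are exactly the compatibility lemmas. First I would observe that the third statement — $\subscoped{\Gamma}{\sub}{\emptyset} \implies \forall n: (\sub, \sub) \in \mathbb{G}^\Gamma_n$ — is essentially immediate once the second is available: by \Cref{def:logrelopen}, $(\sub, \sub) \in \mathbb{G}^\Gamma_n$ requires $\subscoped{\Gamma}{\sub}{\emptyset}$ (given) twice and, for every $x \in \Gamma$, that $(\sub(x), \sub(x)) \in \mathbb{V}_n$. Since $\subscoped{\Gamma}{\sub}{\emptyset}$ gives $\valscoped{\emptyset}{\sub(x)}$, the value $\sub(x)$ is closed; applying the second statement with $\Gamma = \emptyset$ yields $(\sub(x), \sub(x)) \in \mathbb{V}^\emptyset$, and unfolding that (the closing substitutions over $\emptyset$ are trivial, and by \Cref{thm:closedSubst} they act as the identity) gives $(\sub(x), \sub(x)) \in \mathbb{V}_n$ for all $n$. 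So the work is concentrated in the first two statements.

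For those, I would proceed by simultaneous (mutual) induction on the derivations of $\expscoped{\Gamma}{e}$ and $\valscoped{\Gamma}{v}$, following the scoping rules in \Cref{fig:scoping}. In each case the goal matches the conclusion of exactly one compatibility rule from \Cref{thm:compat}, and the induction hypotheses supply precisely that rule's premises. Concretely: for a literal $l$ or empty list $[]$, apply the nullary compatibility rules $(l,l) \in \mathbb{V}^\Gamma$ and $([],[]) \in \mathbb{V}^\Gamma$. For a variable $x$ (or function identifier $f/k$) the scoping premise is $x \in \Gamma$, which is exactly the hypothesis of the compatibility rule $(x,x) \in \mathbb{V}^\Gamma$. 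For a value used as an expression, use the first compatibility rule $(v,v')\in\mathbb{V}^\Gamma \Rightarrow (v,v')\in\mathbb{E}^\Gamma$ together with the value-case induction hypothesis. For each compound construct — $e_1 + e_2$, $\elet{x}{e_1}{e_2}$, $\apply{e}{e_1}{e_k}$, $[e_1|e_2]$, $\case{e_1}{p}{e_2}{e_3}$, $\letrec{f/k}{x_1}{x_k}{e_0}{e}$, and the function value $\fun{f/k}{x_1}{x_k}{e}$ — the scoping rule decomposes the premise into scoping judgements on the immediate subterms over the appropriately extended contexts ($\Gamma$, or $\Gamma \cup \{x\}$, or $\Gamma \cup \textit{vars}(p)$, or $\Gamma \cup \{f/k, x_1, \dots, x_k\}$, etc.); the induction hypotheses turn each of these into the corresponding $\mathbb{E}^{\Gamma'}$ or $\mathbb{V}^{\Gamma'}$ membership, and feeding these into the matching compatibility rule yields the goal.

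The main obstacle is not really in this theorem itself — modulo the mutual-induction bookkeeping it is a mechanical dispatch — but is hidden in the prerequisite it leans on, namely the compatibility rule for recursive functions (and the function-value compatibility rule), whose proof in \Cref{thm:compat} required an auxiliary induction on the step-index because the logical value relation $\mathbb{V}_n$ at a function decreases the index. Assuming \Cref{thm:compat} as given, the only mild care needed here is to make sure the mutual induction is set up so that, e.g., the function-value case has access to the induction hypothesis on its body expression over the extended context, which is exactly how the scoping judgement is structured; and to invoke \Cref{thm:closedSubst} where needed so that the specialisation to $\Gamma=\emptyset$ in the argument for the third statement goes through cleanly. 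I would close by noting that the full mechanised proof, including the supporting lemmas, is in the Coq development~\cite{coreerlangmini}.
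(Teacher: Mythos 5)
Your proposal is correct and follows essentially the same route as the paper: structural (mutual) induction on the expression/value, discharging every case with the corresponding compatibility rule from \Cref{thm:compat}, with the $\mathbb{G}^\Gamma_n$ statement obtained as a corollary of the value case applied to the closed values $\sub(x)$. The extra detail you give on unfolding $\mathbb{V}^\emptyset$ via \Cref{thm:closedSubst} is a reasonable elaboration of what the paper leaves implicit.
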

\begin{proof}
	We carry out induction on $e$ (and $v$ resp.). Then for all cases, we can just use the corresponding compatibility rule from \Cref{thm:compat}, moreover, the premises of these rules are satisfied either by the scoping premises or the induction hypotheses.
	
	The fundamental property of $\mathbb{G}^\Gamma_n$ follows from the fundamental property of $\mathbb{V}^\Gamma$.
\end{proof}

\subsection{CIU Equivalence}

Alongside proving the properties of the logical relations, we have also formalised CIU (``closed instances of uses'') preorder and equivalence relations~\cite{mason1991equivalence}. Informally, two expressions are CIU equivalent if they both terminate or diverge when placed in arbitrary reduction contexts.

\begin{definition}[CIU preorder]
\begin{flalign*}
  e_1 \ciupre e_2 &:= \expscoped{\emptyset}{e_1}  \land \expscoped{\emptyset}{e_2} \land \forall K: \framesclosed{K} \implies \term{K}{e_1} \implies \term{K}{e_2}\\
  e_1 \ciuequiv e_2 &:= e_1 \ciupre e_2 \land e_2 \ciupre e_1 \\
  e_1 \ciupre^\Gamma e_2 &:= \expscoped{\Gamma}{e_1} \land \expscoped{\Gamma}{e_2} \land \forall \sub: \subscoped{\Gamma}{\sub}{\emptyset} \implies \subst{e_1}{\sub} \ciupre \subst{e_2}{\sub} \\
  e_1 \ciuequiv^\Gamma e_2 &:= e_1 \ciupre^\Gamma e_2 \land e_2 \ciupre^\Gamma e_1
\end{flalign*}
\end{definition}

Practice shows, that usually proving expression CIU equivalent is simpler than proving them contextually equivalent~\cite{ramanujam1998foundations} or related by the logical relations (it requires only one frame stack and one substitution, rather than related pairs)~\cite{wand2018contextual}. After defining the CIU preorder, we also proved its correspondence with the logical relations (see~\cite{coreerlangmini}):

\begin{theorem}[CIU coincides with the logical relations]\label{thm:ciulogrel}
\begin{flalign*}
	e_1 \ciupre^\Gamma e_2 \iff (e_1, e_2) \in \mathbb{E}^\Gamma
\end{flalign*}
\end{theorem}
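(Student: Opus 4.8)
The plan is to prove the two implications separately, since the $(\Leftarrow)$ direction is essentially immediate and the $(\Rightarrow)$ direction requires an auxiliary construction. For $(\Leftarrow)$, assume $(e_1, e_2) \in \mathbb{E}^\Gamma$ and prove $e_1 \ciupre^\Gamma e_2$. After fixing a closing substitution $\sub$ with $\subscoped{\Gamma}{\sub}{\emptyset}$, the fundamental property (\Cref{thm:fundamental}) applied to $\sub$ gives $(\sub, \sub) \in \mathbb{G}^\Gamma_n$ for every $n$, so from $(e_1, e_2) \in \mathbb{E}^\Gamma$ we obtain $(\subst{e_1}{\sub}, \subst{e_2}{\sub}) \in \mathbb{E}_n$ for all $n$. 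Now, given a closed frame stack $K$ with $\term{K}{\subst{e_1}{\sub}}$, say at step index $m$, we need a partner stack in $\mathbb{K}_m$; here I would use the fundamental property for frame stacks — or prove the easy lemma that $\framesclosed{K} \implies (K, K) \in \mathbb{K}_m$ for all $m$, which follows from reflexivity of $\mathbb{V}$ and \Cref{thm:semtermEqTerm} — and then unfold the definition of $\mathbb{E}_m$ to conclude $\term{K}{\subst{e_2}{\sub}}$. The scoping side-conditions of $\ciupre^\Gamma$ come for free from the scoping conjuncts built into $\mathbb{E}^\Gamma$.

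For $(\Rightarrow)$, assume $e_1 \ciupre^\Gamma e_2$ and prove $(e_1, e_2) \in \mathbb{E}^\Gamma$. The scoping conjuncts transfer directly. Fixing $n$ and $(\sub_1, \sub_2) \in \mathbb{G}^\Gamma_n$, I must show $(\subst{e_1}{\sub_1}, \subst{e_2}{\sub_2}) \in \mathbb{E}_n$, i.e. for $m \le n$ and $(K_1, K_2) \in \mathbb{K}_m$ with $\termk{K_1}{\subst{e_1}{\sub_1}}{m}$, derive $\term{K_2}{\subst{e_2}{\sub_2}}$. The difficulty is that CIU only gives me a single substitution and a single stack, whereas here I have a related pair on each side. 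The standard bridging device (following Wand et al.) is to chain through the logical relation: from the fundamental property, $(e_2, e_2) \in \mathbb{E}^\Gamma$, and since $\ciupre^\Gamma$ applied with the substitution $\sub_1$ is available on the left, the idea is to factor the argument as
\[
(\subst{e_1}{\sub_1}, \subst{e_2}{\sub_1}) \in \mathbb{E}_n \quad\text{(via the fundamental property of $e$'s and $\mathbb{G}$)},
\]
combined with the CIU hypothesis to pass from $\subst{e_2}{\sub_1}$ to $\subst{e_2}{\sub_2}$ at the level of termination under the closed stack $K_2$. Concretely: from $\termk{K_1}{\subst{e_1}{\sub_1}}{m}$ and $(K_1,K_2)\in\mathbb{K}_m$ together with $(\subst{e_1}{\sub_1},\subst{e_2}{\sub_1})\in\mathbb{E}_m$ — which holds because $(e_1,e_1)\in\mathbb{E}^\Gamma$ would not suffice, so instead one uses $(e_1,e_2)$-style reasoning is circular; hence the right move is: use the fundamental property $(\sub_1,\sub_1)\in\mathbb{G}^\Gamma_m$ only for the $e_2$ side, and handle the $e_1$-to-$e_2$ step purely through $\ciupre$.

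Spelling out that key step: since $K_2$ is a closed frame stack, I first need $\term{K_2}{\subst{e_1}{\sub_1}}$, which I get by observing $(K_1,K_2)\in\mathbb{K}_m$, $\termk{K_1}{\subst{e_1}{\sub_1}}{m}$, and — treating $\subst{e_1}{\sub_1}$ against itself — the fundamental property gives $(\subst{e_1}{\sub_1},\subst{e_1}{\sub_1})\in\mathbb{E}_m$ once we know $(\sub_1,\sub_1)\in\mathbb{G}^\Gamma_m$; but $(\sub_1,\sub_1)$ need not be in $\mathbb{G}^\Gamma_m$ when $\sub_1 \ne \sub_2$. The cleaner route, and the one I would actually carry out, is: from $(\sub_1,\sub_2)\in\mathbb{G}^\Gamma_n$ and the fundamental property applied to $e_2$, obtain $(\subst{e_2}{\sub_1},\subst{e_2}{\sub_2})\in\mathbb{E}_n$; separately, $\ciupre^\Gamma$ instantiated at $\sub_1$ gives $\subst{e_1}{\sub_1}\ciupre\subst{e_2}{\sub_1}$, hence for the \emph{closed} stack $K_1$, $\term{K_1}{\subst{e_1}{\sub_1}}\implies\term{K_1}{\subst{e_2}{\sub_1}}$, and with \Cref{thm:semtermEqTerm} we may track the step index. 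Chaining: $\termk{K_1}{\subst{e_1}{\sub_1}}{m}$ gives $\term{K_1}{\subst{e_2}{\sub_1}}$ by CIU, and then $(\subst{e_2}{\sub_1},\subst{e_2}{\sub_2})\in\mathbb{E}$ together with $(K_1,K_2)\in\mathbb{K}$ (suitably re-indexed, using the monotonicity $\mathbb{K}_m\subseteq\mathbb{K}_n$ and that termination is upward-closed in the index) yields $\term{K_2}{\subst{e_2}{\sub_2}}$, as required.

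The main obstacle is exactly this index bookkeeping in the $(\Rightarrow)$ direction: CIU is stated with plain (any-step) termination $\term{\cdot}{\cdot}$, while $\mathbb{E}_n$ and $\mathbb{K}_n$ are step-indexed, so I must be careful that passing through CIU does not lose the index I still need to feed the logical relation on the $\sub_1$-to-$\sub_2$ leg. I expect this to be handled by first discharging the $e_1$-to-$e_2$-under-$\sub_1$ step with CIU (which needs no index), landing in $\term{K_1}{\subst{e_2}{\sub_1}}$, re-extracting a fresh step index via \Cref{thm:semtermEqTerm}, and only then invoking $(\subst{e_2}{\sub_1},\subst{e_2}{\sub_2})\in\mathbb{E}_n$ with $(K_1,K_2)\in\mathbb{K}_n$ obtained from $(K_1,K_2)\in\mathbb{K}_m$ by upward monotonicity. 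For the full details — in particular the monotonicity lemmas for $\mathbb{V}_n,\mathbb{E}_n,\mathbb{K}_n$ and the precise re-indexing — I would refer to the formalisation~\cite{coreerlangmini}.
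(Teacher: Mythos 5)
Your $\Leftarrow$ direction matches the paper's: specialise the definition of $\mathbb{E}^\Gamma$ with $\sub_1 = \sub_2 = \sub$, invoke the fundamental property for $(\sub,\sub) \in \mathbb{G}^\Gamma_n$, and close the frame-stack obligation with reflexivity of $\mathbb{K}$ on closed stacks. That half is correct.

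The $\Rightarrow$ direction has a genuine gap, which you notice but do not repair. You factor the argument as $\subst{e_1}{\sub_1} \rightarrow \subst{e_2}{\sub_1} \rightarrow \subst{e_2}{\sub_2}$, applying CIU \emph{first} and the logical relation \emph{second}. After the CIU step you only know $\term{K_1}{\subst{e_2}{\sub_1}}$ at some step index $m'$ that is not bounded by $n$; to feed this into $(\subst{e_2}{\sub_1},\subst{e_2}{\sub_2}) \in \mathbb{E}_n$ you would need $m' \le n$ and $(K_1,K_2) \in \mathbb{K}_{m'}$. The ``monotonicity $\mathbb{K}_m \subseteq \mathbb{K}_{m'}$'' you appeal to goes the wrong way: unfolding the definitions, a higher index imposes strictly more conditions, so the relations $\mathbb{V}$, $\mathbb{E}$, $\mathbb{K}$ are closed \emph{downward} in the index, and nothing about $\mathbb{K}_{m'}$ for $m' > m$ follows from $(K_1,K_2) \in \mathbb{K}_m$; nor does $(\sub_1,\sub_2) \in \mathbb{G}^\Gamma_n$ give you $\mathbb{G}^\Gamma_{m'}$ for $m' > n$. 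The repair is to swap the two legs, which is exactly the route you dismissed as insufficient: prove $(e_1,e_2) \in \mathbb{E}^\Gamma \land e_2 \ciupre^\Gamma e_3 \implies (e_1,e_3) \in \mathbb{E}^\Gamma$ and instantiate it with $(e_1,e_1) \in \mathbb{E}^\Gamma$ from \Cref{thm:fundamental}. The point is that $(e_1,e_1) \in \mathbb{E}^\Gamma$ is used not to change the expression but to cross from $(\sub_1,K_1)$ to $(\sub_2,K_2)$ --- this is the only place the step index is consumed, and it is consumed at exactly the $m \le n$ you were handed --- yielding $\term{K_2}{\subst{e_1}{\sub_2}}$; only then is $\ciupre$ applied to the closed instances $\subst{e_1}{\sub_2}$ and $\subst{e_2}{\sub_2}$ under the closed stack $K_2$, where no index is needed at all. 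With that reordering the proof closes without any re-indexing.
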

\begin{proof}
We follow the footsteps of Wand et al.~\cite{wand2018contextual} in this proof.

$\Rightarrow$: We prove $(e_1, e_2) \in \mathbb{E}^\Gamma \land e_2 \ciupre^\Gamma e_3 \implies (e_1, e_3) \in \mathbb{E}^\Gamma$, which is a trivial consequence of the definitions. Thereafter, we prove our goal by using $(e_1, e_1) \in \mathbb{E}^\Gamma$ as the first premise of this helper statement by the fundamental property (\Cref{thm:fundamental}).

$\Leftarrow$: The closing substitution required by the CIU preorder is denoted by $\sub$. We specialize \Cref{def:logrelopen} of $\mathbb{E}^\Gamma$ with $\sub_1 = \sub_2 = \sub$, and by the fundamental property (\Cref{thm:fundamental}), $(\sub, \sub) \in \mathbb{G}^\Gamma_n$. Thereafter, we just use \Cref{def:logrelopen} of $\mathbb{E}_n$ to finish the proof.
\end{proof}

\subsection{Example simple equivalences}

We also proved a number of simple programs to be equivalent. We show the proof of the first one, but omit the others and refer to the formalisation~\cite{coreerlangmini}. The first equivalence is special, because it will be used in the proofs for the equality of the equivalence relations. 

\begin{example}[Beta reduction 1]\label{ex:beta1}
\begin{flalign*}
	&\expscoped{\Gamma \cup \{x\}}{e} \land \valscoped{\Gamma}{v} \implies e[x \mapsto v] \ciuequiv^\Gamma \elet{x}{v}{e}
\end{flalign*}
\end{example}
\begin{proof}
	Since we prove an equivalence, it means two preorders. We also need to prove a number of closedness properties, which can be done by the hypotheses and the lemmas in \Cref{sec:subst}, we leave these to the reader.
	\begin{itemize}
		\item First, we need to prove the following: for any closed frame stack $K$ and closing substitution $\sub$ (i.e. $\subscoped{\Gamma}{\sub}{\emptyset}$), $\term{K}{\subst{\subst{e}{x \mapsto v}}{\sub}} \implies \term{K}{\elet{x}{\subst{v}{\sub}}{\subst{e}{\sub}}}$. Let us assume, that $\termk{K}{\subst{\subst{e}{x \mapsto v}}{\sub}}{k}$ for a step-index $k$. We can show, that $\termk{K}{\elet{x}{\subst{v}{\sub}}{\subst{e}{\sub}}}{2 + k}$ by definition. If we make these two steps, we get $\termk{K}{\subst{\subst{e}{\sub \restrict x}}{x \mapsto \subst{v}{\sub}}}{k}$, and by the properties of capture-avoiding substitution, $\subst{\subst{e}{\sub \restrict x}}{x \mapsto \subst{v}{\sub}} = \subst{\subst{e}{x \mapsto v}}{\sub}$ (we refer to the formalisation for more details~\cite{coreerlangmini}).
		\item Next, we need to prove the following: for any closed frame stack $K$ and closing substitution $\sub$ (i.e. $\subscoped{\Gamma}{\sub}{\emptyset}$), $\term{K}{\elet{x}{\subst{v}{\sub}}{\subst{e}{\sub}}} \implies \term{K}{\subst{\subst{e}{x \mapsto v}}{\sub}}$. Now we inspect the premise $\term{K}{\elet{x}{\subst{v}{\sub}}{\subst{e}{\sub}}}$, and conclude, that by definition $\termk{K}{\subst{\subst{e}{\sub \restrict x}}{x \mapsto \subst{v}{\sub}}}{k}$ should hold for some $k$. This $k$ is suitable for the derivation in the goal ($\termk{K}{\subst{\subst{e}{x \mapsto v}}{\sub}}{k}$), which is identical to this premise when we use the previous thought about the equality of the substitutions $\subst{\subst{e}{\sub \restrict x}}{x \mapsto \subst{v}{\sub}} = \subst{\subst{e}{x \mapsto v}}{\sub}$.
	\end{itemize}
\end{proof}

\begin{example}[Beta reduction 2]
\begin{flalign*}
	&\valscoped{\Gamma}{\fun{f/k}{x_1}{x_k}{e}} \land \valscoped{\Gamma}{v_1} \land \dots \land \valscoped{\Gamma}{v_k} \implies\\
	&\qquad\subst{e}{f/k \mapsto \fun{f/k}{x_1}{x_k}{e}, x_1 \mapsto v_1, \dots, x_k \mapsto v_k} \ciuequiv^\Gamma\\
	&\qquad\apply{(\fun{f/k}{x_1}{x_k}{e})}{v_1}{v_k}
\end{flalign*}
\end{example}

\begin{example}[Beta reduction 3]
\begin{flalign*}
	&\valscoped{\emptyset}{v_1} \land \dots \valscoped{\emptyset}{v_k} \land \expscoped{\Gamma}{e} \land x_1, \dots, x_k \notin \Gamma \implies\\
	&\qquad e \ciuequiv^\Gamma \apply{(\fun{f/k}{x_1}{x_k}{e})}{v_1}{v_k}
\end{flalign*}
\end{example}

The following two equivalences are also special to us: in our current language (which is a simplified variant of sequential Core Erlang) they hold, but with side effects and exceptions added, other preconditions will be needed to prove them.

\begin{example}[Commutativity of addition]
\begin{flalign*}
&\expscoped{\Gamma}{e_1} \land \expscoped{\Gamma}{e_2} \implies e_1 + e_2 \ciuequiv^\Gamma e_2 + e_1
\end{flalign*}
\end{example}

\begin{example}[Sequencing]
\begin{flalign*}
&\expscoped{\Gamma}{e_2} \land \term{\idfs}{e_1} \land\ \expscoped{\emptyset}{e_1} \land x \notin \Gamma \implies e_2 \ciuequiv^\Gamma \elet{x}{e_1}{e_2}
\end{flalign*}
\end{example}

\subsection{Revisiting Contextual Preorder and Equivalence}

We describe a refined contextual equivalence relation based on the definitions by Wand et al.~\cite{wand2018contextual} and Gordon et al.~\cite{gordon1999compilation}.

% We formalised contextual preorder (following the idea of Wand et al.~\cite{wand2018contextual}) as the largest family of relations $R^\Gamma$, such that it is adequate (if $(e_1, e_2) \in R^{\emptyset}$ then $\term{\idfs}{e_1} \implies \term{\idfs}{e_2}$), transitive and reflexive (for every $\Gamma$), moreover satisfies the compatibility rules for expressions (for every $\Gamma$, see Thm. \ref{thm:compat} which is specialised for the logical relations).

\begin{definition}[Contextual preorder]
We define the contextual preorder to be the largest family of relations $R^\Gamma$ that satisfy the following properties:
\begin{itemize}
	\item Adequacy: $(e_1, e_2) \in R^{\emptyset} \implies \term{\idfs}{e_1} \implies \term{\idfs}{e_2}$.
	\item Reflexivity: $(e, e) \in R^\Gamma$.
	\item Transitivity: $(e_1, e_2) \in R^\Gamma \land (e_2, e_3) \in R^\Gamma \implies (e_1, e_3) \in R^\Gamma$.
	\item Compatibility: $R^\Gamma$ satisfies the compatibility rules for every expression from~\Cref{thm:compat}.
\end{itemize}
\end{definition}

We also adjusted our previous notion of contextual preorder and equivalence, and proved that it satisfies the criteria above (we refer to the formalisation~\cite{coreerlangmini}). In this case the context ``closes'' the potentially open expressions.

\begin{flalign*}
e_1 \ctxpre^\Gamma e_2 :=\ & \expscoped{\Gamma}{e_1} \land \expscoped{\Gamma}{e_2} \land
 \forall (C : \textit{Context}): \expscoped{\emptyset}{\subst{C}{e_1}} \land \expscoped{\emptyset}{\subst{C}{e_2}} \implies\\ & \term{\idfs}{\subst{C}{e_1}} \implies \term{\idfs}{\subst{C}{e_2}} \\
e_1 \ctxequiv^\Gamma e_2 :=\ & e_1 \ctxpre^\Gamma e_2 \land e_2 \ctxpre^\Gamma e_1
\end{flalign*}

After defining the contextual preorder, we proved the equality between $\ciupre^\Gamma$ and $\ctxpre^\Gamma$.

\begin{theorem}[CIU is a contextual preorder]\label{thm:ciuctx}
\begin{flalign*}
	&e_1 \ciupre^\Gamma e_2 \implies e_1 \ctxpre^\Gamma e_2
\end{flalign*}
\end{theorem}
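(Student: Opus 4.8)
The plan is to exploit the characterisation of $\ctxpre^\Gamma$ as the \emph{largest} $\Gamma$-indexed family of relations closed under adequacy, reflexivity, transitivity and compatibility. Hence it suffices to verify that $\ciupre^\Gamma$ itself satisfies these four properties; maximality then gives $\ciupre^\Gamma \subseteq \ctxpre^\Gamma$, which is exactly the statement. By \Cref{thm:ciulogrel} we may replace $\ciupre^\Gamma$ throughout by the logical relation $\mathbb{E}^\Gamma$, and two of the four properties are then immediate: compatibility is exactly \Cref{thm:compat}, and reflexivity is the expression clause of the fundamental property (\Cref{thm:fundamental}).

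Transitivity is essentially definitional: $\ciupre^\Gamma$ just composes implications between termination judgements $\term{K}{\subst{e}{\sub}}$, so $e_1 \ciupre^\Gamma e_2$ and $e_2 \ciupre^\Gamma e_3$ yield $e_1 \ciupre^\Gamma e_3$ at once (equivalently, this is the auxiliary fact $(e_1,e_2)\in\mathbb{E}^\Gamma \land e_2 \ciupre^\Gamma e_3 \implies (e_1,e_3)\in\mathbb{E}^\Gamma$ already established inside the proof of \Cref{thm:ciulogrel}). For adequacy, assume $(e_1,e_2)\in\mathbb{E}^\emptyset$ and $\term{\idfs}{e_1}$, say $\termk{\idfs}{e_1}{m}$. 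Since $\Gamma = \emptyset$, the pair $(\idsubst,\idsubst)$ lies in $\mathbb{G}^\emptyset_m$ and closed expressions are unaffected by substitution (\Cref{thm:closedSubst}), so $(e_1,e_2)\in\mathbb{E}_m$. Instantiating \Cref{def:logrelclosed} with $K_1 = K_2 = \idfs$, it remains to check $(\idfs,\idfs)\in\mathbb{K}_m$: any derivation of $\termk{\idfs}{v_1}{k}$ forces $k=0$ with $v_1$ (hence, being in $\mathbb{V}_0$ with it, also $v_2$) a value, and then $\termk{\idfs}{v_2}{0}$ holds trivially. Therefore $\term{\idfs}{e_2}$, as required.

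With the four properties in hand, maximality of $\ctxpre^\Gamma$ concludes the proof. I do not anticipate a genuine obstacle here: the substantive work is already carried by \Cref{thm:compat}, \Cref{thm:fundamental} and \Cref{thm:ciulogrel}, and what remains is bookkeeping, the only mildly delicate point being the adequacy check — in particular the observation that the empty stack admits no nontrivial termination derivation, which is what makes $(\idfs,\idfs)\in\mathbb{K}_m$ hold. Should one prefer not to route through the abstract ``largest family'' characterisation, the same conclusion follows directly by induction on the context $C$: the compatibility rules of \Cref{thm:compat}, together with the fundamental property supplying reflexive relatedness of the subterms of $C$ other than the hole, show $(\subst{C}{e_1},\subst{C}{e_2})\in\mathbb{E}^\emptyset$ whenever both pluggings are closed, and adequacy of $\mathbb{E}^\emptyset$ then delivers $\term{\idfs}{\subst{C}{e_1}} \implies \term{\idfs}{\subst{C}{e_2}}$.
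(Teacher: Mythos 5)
Your proof is correct and follows essentially the same route as the paper's: both verify that $\ciupre^\Gamma$ (equivalently $\mathbb{E}^\Gamma$ by \Cref{thm:ciulogrel}) satisfies adequacy, reflexivity, transitivity and compatibility --- the last being exactly \Cref{thm:compat} --- and then invoke the maximality of the contextual preorder. The paper compresses this into one line, remarking only that transitivity needs a little first-order reasoning; your write-up merely fills in the routine details (e.g.\ the adequacy check, which also follows directly from the CIU definition by taking $K = \idfs$ and the identity closing substitution).
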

\begin{proof}
	This theorem is just a consequence of the compatibility of the logical relations (\Cref{thm:compat}), which coincide with CIU (\Cref{thm:ciulogrel}). Only the proof of transitivity requires simple reasoning in first-order logic.
\end{proof}

\begin{lemma}[Contextual equivalence is closed under substitution]\label{thm:ctxclosed}
\begin{flalign*}
	&e_1 \ctxpre^{\Gamma \cup \{x\}} e_2 \implies \forall v, \valscoped{\Gamma}{v} \implies \subst{e_1}{x \mapsto v} \ctxpre^\Gamma \subst{e_2}{x \mapsto v}
\end{flalign*}
\end{lemma}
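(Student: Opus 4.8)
The plan is to avoid unfolding the universal quantification over contexts in the definition of $\ctxpre^\Gamma$, and instead to work with the abstract properties that $\ctxpre^\Gamma$ is already known to satisfy — reflexivity, transitivity and the compatibility rules of \Cref{thm:compat} — together with the ``beta reduction'' equivalence of \Cref{ex:beta1}. The key observation is that the substitution $[x \mapsto v]$ can be simulated by a $\mathtt{let}$-binding, and $\mathtt{let}$ is one of the constructs for which compatibility holds.

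Concretely, I would proceed as follows. First, from $e_1 \ctxpre^{\Gamma \cup \{x\}} e_2$ extract $\expscoped{\Gamma \cup \{x\}}{e_1}$ and $\expscoped{\Gamma \cup \{x\}}{e_2}$; combined with $\valscoped{\Gamma}{v}$, \Cref{ex:beta1} gives
\[
\subst{e_1}{x \mapsto v} \ciuequiv^\Gamma \elet{x}{v}{e_1} \qquad\text{and}\qquad \subst{e_2}{x \mapsto v} \ciuequiv^\Gamma \elet{x}{v}{e_2},
\]
hence, by \Cref{thm:ciuctx}, $\subst{e_1}{x \mapsto v} \ctxpre^\Gamma \elet{x}{v}{e_1}$ and $\elet{x}{v}{e_2} \ctxpre^\Gamma \subst{e_2}{x \mapsto v}$ (the scoping side-conditions on $\subst{e_i}{x \mapsto v}$ come for free as part of these relations, or directly from \Cref{thm:extsubst} and \Cref{thm:substPres}). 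Second, by reflexivity of $\ctxpre^\Gamma$ we have $v \ctxpre^\Gamma v$, and applying the compatibility rule for $\mathtt{let}$ (from \Cref{thm:compat}, interpreted for $\ctxpre$) to $v \ctxpre^\Gamma v$ and the hypothesis $e_1 \ctxpre^{\Gamma \cup \{x\}} e_2$ yields $\elet{x}{v}{e_1} \ctxpre^\Gamma \elet{x}{v}{e_2}$. Finally, chaining the three relations by transitivity of $\ctxpre^\Gamma$ delivers $\subst{e_1}{x \mapsto v} \ctxpre^\Gamma \subst{e_2}{x \mapsto v}$.

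The hardest part of a naive proof would be handling the quantification over contexts directly: given a context $C$ closing $\subst{e_i}{x \mapsto v}$, one would form the composed context $C[\elet{x}{v}{\Box}]$ and invoke $e_1 \ctxpre^{\Gamma \cup \{x\}} e_2$ on it, but this forces one to track which variables $C$ binds on the path to its hole and, in particular, to verify that $C$ still closes $\elet{x}{v}{e_i}$ — which need not be obvious when $x$ does not occur free in $e_i$ but $v$ has free variables. The route above sidesteps this entirely by using only the abstract properties of $\ctxpre^\Gamma$, so the only genuine care needed is the scope bookkeeping when instantiating \Cref{ex:beta1} and the $\mathtt{let}$-compatibility rule, which is routine given the lemmas of \Cref{sec:subst}.
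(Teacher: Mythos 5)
Your proposal is correct and follows essentially the same route as the paper: the paper's proof likewise invokes \Cref{ex:beta1} with the expressions $\elet{x}{v}{e_1}$ and $\elet{x}{v}{e_2}$, transitivity, and \Cref{thm:ciuctx}. You merely make explicit the intermediate step — relating $\elet{x}{v}{e_1}$ and $\elet{x}{v}{e_2}$ via reflexivity and the $\mathtt{let}$-compatibility rule — which the paper leaves implicit but certainly relies on.
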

\begin{proof}
  This lemma is a consequence of \Cref{ex:beta1} with the expressions $\elet{x}{v}{e_1}$ and $\elet{x}{v}{e_2}$, transitivity, and the fact that CIU equivalence implies contextual equivalence (\Cref{thm:ciuctx}).
\end{proof}

\begin{theorem}[CIU is the greatest contextual preorder]\label{thm:ctxciu}
\begin{flalign*}
	&e_1 \ctxpre^\Gamma e_2 \implies e_1 \ciupre^\Gamma e_2
\end{flalign*}
\end{theorem}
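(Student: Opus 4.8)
The plan is to encode the two ingredients of the CIU relation --- the closing substitution and the closed frame stack --- as a single syntactic context, and then invoke the hypothesis $e_1 \ctxpre^\Gamma e_2$. The scoping conjuncts $\expscoped{\Gamma}{e_1}$ and $\expscoped{\Gamma}{e_2}$ of $e_1 \ciupre^\Gamma e_2$ come straight from the hypothesis, so it remains to fix an arbitrary $\sub$ with $\subscoped{\Gamma}{\sub}{\emptyset}$ and an arbitrary closed $K$, and to prove $\term{K}{\subst{e_1}{\sub}} \implies \term{K}{\subst{e_2}{\sub}}$. Writing $\Gamma = \{x_1, \dots, x_n\}$, $v_i := \sub(x_i)$ (closed values), and $K = [F_1, \dots, F_m]$, I would take
\[
  C \ :=\ \subst{F_m}{\subst{F_{m-1}}{\cdots \subst{F_1}{\ \elet{x_1}{v_1}{\dots \elet{x_n}{v_n}{\Box}}\ }}},
\]
so that $\subst{C}{e_j} = \subst{F_m}{\cdots \subst{F_1}{\elet{x_1}{v_1}{\dots \elet{x_n}{v_n}{e_j}}}}$ for $j \in \{1,2\}$. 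Every frame of the grammar embeds into a context former (e.g.\ $v_1 + \Box$ into $e_1 + C$, $\apply{v_0}{\Box}{e_k}$ into $\apply{e}{C}{e_k}$, $\elet{x}{\Box}{e_2}$ into $\elet{x}{C}{e_2}$, $\case{\Box}{p}{e_2}{e_3}$ into $\case{C}{p}{e_2}{e_3}$, and so on), so $C$ is a legal single-hole context; and since the $v_i$ and $K$ are closed, the scoping rules of \Cref{fig:scoping} together with the substitution lemmas of \Cref{sec:subst} give $\expscoped{\emptyset}{\subst{C}{e_1}}$ and $\expscoped{\emptyset}{\subst{C}{e_2}}$.

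I would then establish two bridging facts. First, for every closed frame stack $K = [F_1,\dots,F_m]$ and closed expression $e$,
\[
  \term{\idfs}{\subst{F_m}{\cdots \subst{F_1}{e}}} \iff \term{K}{e},
\]
obtained by iterating \Cref{thm:putback} and \Cref{thm:putbackrev} one frame at a time along $K$ (all intermediate expressions stay closed, since the frames and $e$ are). Second, for $\expscoped{\Gamma}{e}$ and any closed $K$,
\[
  \term{K}{\elet{x_1}{v_1}{\dots \elet{x_n}{v_n}{e}}} \iff \term{K}{\subst{e}{\sub}},
\]
obtained by iterating \Cref{ex:beta1} to peel off the $n$ outer $\mathtt{let}$s one at a time, using the (immediate) transitivity of $\ciupre$: closedness of the $v_i$ lets the successive single-variable substitutions commute past the remaining $\mathtt{let}$s and compose into the parallel substitution $\subst{e}{x_1 \mapsto v_1, \dots, x_n \mapsto v_n}$, and $\expscoped{\Gamma}{e}$ makes this equal to $\subst{e}{\sub}$ (only $\sub \restrict \Gamma$ is relevant). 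Composing the two facts gives $\term{\idfs}{\subst{C}{e_j}} \iff \term{K}{\subst{e_j}{\sub}}$ for $j \in \{1,2\}$.

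The conclusion is now immediate. Assuming $\term{K}{\subst{e_1}{\sub}}$, the two facts yield $\term{\idfs}{\subst{C}{e_1}}$; instantiating $e_1 \ctxpre^\Gamma e_2$ with the context $C$ (whose closedness side conditions were checked above) gives $\term{\idfs}{\subst{C}{e_2}}$; and the two facts read backwards give $\term{K}{\subst{e_2}{\sub}}$, as required. The accompanying statement $e_1 \ctxequiv^\Gamma e_2 \implies e_1 \ciuequiv^\Gamma e_2$ is then just the conjunction of this with its symmetric instance.

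I expect the main obstacle to be the substitution bookkeeping behind the second bridging fact: justifying precisely, in the de~Bruijn formalisation, that the iterated single substitutions produced by peeling off the $\mathtt{let}$s agree with the parallel substitution $\subst{e}{\sub}$ (capture-avoidance, closedness of the $v_i$, distinctness of the $x_i$), together with the care needed to keep track of the trailing stack when iterating \Cref{thm:putback}/\Cref{thm:putbackrev} for the first fact with a non-empty $K$.
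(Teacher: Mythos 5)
Your proposal is correct in substance, but it takes a genuinely different route from the paper's. You fix $K$ and $\sub$ once, compile them into a single syntactic context $C$ (each frame becoming the corresponding context former, the closing substitution becoming a block of nested $\mathtt{let}$s), and appeal to the hypothesis $e_1 \ctxpre^\Gamma e_2$ exactly once; the bridges you need are precisely \Cref{thm:putback}, \Cref{thm:putbackrev} and \Cref{ex:beta1}, iterated externally. The paper never builds a composite context: it runs an outer induction on the size of $\Gamma$, peeling off one variable of the closing substitution per step via \Cref{thm:ctxclosed} (which packages \Cref{ex:beta1} with transitivity and \Cref{thm:ciuctx}), and in the base case an inner induction on the structure of $K$, peeling off one frame per step with \Cref{thm:putback}/\Cref{thm:putbackrev}, using adequacy for the empty stack and the compatibility rules for the cons case. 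The substantive difference is what the two arguments depend on: the paper's proof uses only the abstract interface of a contextual preorder --- adequacy, transitivity, compatibility --- so it shows that \emph{any} relation in the ``largest family'' is contained in $\ciupre^\Gamma$, which is what the theorem's title (``greatest contextual preorder'') is really asserting; your proof is tied to the concrete $\forall C$ definition displayed just before the theorem and establishes the implication for that instance only. In exchange, your route is a single appeal to the hypothesis, needs neither \Cref{thm:ciuctx} nor compatibility of $\ctxpre^\Gamma$, and matches the classical ``reduction contexts are special contexts'' intuition; its cost is exactly the bookkeeping you flag --- context composition, and the identification of the iterated single substitutions produced by the $\mathtt{let}$-block with the parallel substitution $\subst{e}{\sub}$ --- which the paper's one-variable-at-a-time and one-frame-at-a-time inductions are designed to avoid.
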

\begin{proof}
  We follow the idea of Wand et al.~\cite{wand2018contextual}. We carry out induction by the size of $\Gamma$.
  \begin{itemize}
  \item If $\Gamma = \emptyset$, both $e_1$ and $e_2$ are closed expressions, that is, we need to prove $e_1 \ciupre e_2$: for any closed frame stack $K$, $\term{K}{e_1} \implies \term{K}{e_2}$. We do induction by the structure of $K$.
  \begin{itemize}
  \item If $K = \idfs$, then we just use the fact of adequacy of $e_1 \ctxpre^\emptyset e_2$ with the empty context to prove $\term{\idfs}{e_1} \implies \term{\idfs}{e_2}$.
  \item If $K = F :: K'$, then we apply \Cref{thm:putback} to the hypothesis, while \Cref{thm:putbackrev} to the goal to be able to apply the induction hypothesis. Now only remains $\subst{F}{e_1} \ctxpre^\emptyset \subst{F}{e_2}$ to prove. After separating cases by the structure of $F$, we can apply the compatibility properties of $\ctxpre^\emptyset$ to finish the proof.
  \end{itemize}

  \item If $\Gamma = \Gamma' \cup \{x\}$, we need to prove that for every $\subscoped{\Gamma' \cup \{x\}}{\sub}{\emptyset}$, $\subst{e_1}{\sub} \ciupre \subst{e_2}{\sub}$. We can also assume, that $x \notin \Gamma'$. We can divide $\sub$ into two parts: $\subst{\subst{e_1}{x \mapsto \sub(x)}}{\sub \restrict \{x\}}$. %\footnote{In the formalisation we prove this by renaming the substitution by upshifting all names by one.}.
  Now we can apply the induction hypothesis, and the only remaining goal is $\subst{e_1}{x \mapsto \sub(x)} \ctxpre \subst{e_2}{x \mapsto \sub(x)}$ which is proven by \Cref{thm:ctxclosed}.
  \end{itemize}
\end{proof}

Putting \Cref{thm:ciuctx} and \Cref{thm:ctxciu} together, we prove the coincidence of the CIU and contextual equivalence.

\begin{theorem}[CIU theorem]\label{thm:ciuthm}
\begin{flalign*}
	&e_1 \ctxpre^\Gamma e_2 \iff e_1 \ciupre^\Gamma e_2
\end{flalign*}
\end{theorem}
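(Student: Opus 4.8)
The plan is to obtain the biconditional by simply conjoining the two implications already established. For the direction $e_1 \ctxpre^\Gamma e_2 \implies e_1 \ciupre^\Gamma e_2$ I would invoke \Cref{thm:ctxciu} (CIU is the greatest contextual preorder) verbatim. For the converse $e_1 \ciupre^\Gamma e_2 \implies e_1 \ctxpre^\Gamma e_2$ I would invoke \Cref{thm:ciuctx} (CIU is a contextual preorder) verbatim. Putting the two together yields $e_1 \ctxpre^\Gamma e_2 \iff e_1 \ciupre^\Gamma e_2$, which is the claim.

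Concretely, the only step to carry out is the trivial propositional combination: assume $e_1 \ctxpre^\Gamma e_2$, apply \Cref{thm:ctxciu} to get $e_1 \ciupre^\Gamma e_2$; conversely assume $e_1 \ciupre^\Gamma e_2$, apply \Cref{thm:ciuctx} to get $e_1 \ctxpre^\Gamma e_2$. No induction, no reasoning about frame stacks or substitutions is needed at this point, since all of that has been discharged inside the two cited theorems (in particular the induction on the size of $\Gamma$ and the frame-by-frame unwinding argument inside the proof of \Cref{thm:ctxciu}).

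Accordingly, I do not expect any genuine obstacle in the proof of this statement itself; the real work lies entirely upstream. If I were to flag where the difficulty \emph{was}, it is in \Cref{thm:ctxciu}: establishing that an arbitrary contextual preorder is contained in the CIU preorder requires the base case $\Gamma = \emptyset$ to reconstruct an arbitrary closed frame stack $K$ from syntactic contexts (using \Cref{thm:putback} and \Cref{thm:putbackrev} together with the compatibility properties of $\ctxpre^{\emptyset}$), and the inductive case to peel off one variable from the closing substitution using \Cref{thm:ctxclosed}, which in turn rests on the beta-style equivalence of \Cref{ex:beta1}. Given those results, the CIU theorem is an immediate corollary.
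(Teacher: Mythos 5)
Your proposal is correct and matches the paper exactly: the paper states the CIU theorem as the immediate combination of \Cref{thm:ciuctx} and \Cref{thm:ctxciu}, with no further argument needed. Your identification of where the real work lies (inside \Cref{thm:ctxciu}) is also consistent with the paper's presentation.
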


\subsection{Revisiting Behavioural Equivalence}\label{sec:termEquiv}

While defining the logical relations, CIU, and contextual equivalence we used only a termination criterion. But why is termination sufficient for the results of the evaluation to be equivalent? What would it mean, that two expressions, values are equivalent? 
We can take the definition of naive behavioural equivalence, and improve it so that it does not distinguish different function values. For this purpose, we define the equivalence of functions in an application-indexed way, that is equivalent functions should evaluate to the same values after the same number ($n$) of applications, over a limit.

\begin{definition}[Behavioural preorder]
\begin{flalign*}
e_1 \le^R e_2 :=&\ \forall v_1: \rewritesstar{K}{e_1}{v_1} \implies \\
&\ \exists v_2: \rewritesstar{K}{e_2}{v_2} \land (v_1, v_2) \in R \\
v_1 \le^{\textit{val}}_0 v_2 :=&\ \textit{true} \\
l_1 \le^{\textit{val}}_{(1 + n')} l_2 :=&\ l_1 = l_2 \\
\fun{f/k}{x_1}{x_k}{e_1} &\le^{\textit{val}}_{(1 + n')} \fun{f/k}{x_1}{x_k}{e_2} := \\
\forall v_1, \dots, v_k: \valscoped{\emptyset}{v_1} \land \dots \land \valscoped{\emptyset}{v_k} \implies&\\	
\subst{e_1}{f/k \mapsto \fun{f/k}{x_1}{x_k}{e_1}&, x_1 \mapsto v_1, \dots, x_k \mapsto v_k} \le^{\le^{\textit{val}}_{n'}} \\
\subst{e_2}{f/k \mapsto \fun{f/k}{x_1}{x_k}{e_2}&, x_1 \mapsto v_1, \dots, x_k \mapsto v_k} \\
[] \le^{\textit{val}}_{(1 + n')} [] :=&\ \textit{true} \\
[v_1 | v_2] \le^{\textit{val}}_{(1 + n')} [v_1' | v_2'] :=&\ v_1 \le^{\textit{val}}_{n'} v_1' \land v_2 \le^{\textit{val}}_{n'} v_2' \\
\end{flalign*}
\end{definition}

We say, that two values $v_1, v_2$ behave the same way ($v_1 \le^{val} v_2$, note that this is only a preorder relation), when $\forall n:v_1 \le^{val}_{n} v_2 $. Two expressions are equivalent ($e_1 \approx e_2$), if $e_1 \le^{\le^{\textit{val}}} e_2 \land e_2 \le^{\le^{\textit{val}}} e_1 $.

\begin{theorem}[Behavioural equivalence coicides with CIU]\label{thm:behciu}
\begin{flalign*}
	&e_1 \approx e_2 \iff e_1 \ciuequiv e_2
\end{flalign*}
\end{theorem}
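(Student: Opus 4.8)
The plan is to prove the two implications separately, each time reducing the claim about closed expressions to a claim about closed \emph{values}; both directions rely essentially on pattern matching (as anticipated in \Cref{sec:syntax}), since \verb|case| is the only construct that can observe the shape of a value and is thus what ties the termination-only relation $\ciuequiv$ to the value-aware relation $\approx$.

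For $e_1 \ciuequiv e_2 \implies e_1 \approx e_2$, I would first note (i) that evaluation preserves $\ciuequiv$: if $\expscoped{\emptyset}{e}$ and $\rewritesstar{\idfs}{e}{v}$ then $e \ciuequiv v$ (from \Cref{thm:extendframes} plus determinism of $\longrightarrow$), so that if $e_1 \ciuequiv e_2$ then --- taking the empty stack in the definition of $\ciupre$ --- either both diverge or both evaluate, to results $v_1 \ciuequiv v_2$; and (ii) that $\ciuequiv$ is a congruence for frames: $v \ciuequiv v' \implies \subst{F}{v} \ciuequiv \subst{F}{v'}$ for closed $F$, by \Cref{thm:putback} and \Cref{thm:putbackrev}. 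The theorem then reduces to showing, for closed values, $v_1 \ciuequiv v_2 \implies \forall n: v_1 \le^{\textit{val}}_n v_2$, which I would prove by induction on $n$: for a literal or $[]$ the stack $\consfs{(\case{\Box}{p}{[]}{\Omega})}{\idfs}$ (with $p$ the corresponding pattern and $\Omega$ a closed divergent expression) forces $v_2$ to agree with $v_1$; for a cons value the stacks $\consfs{(\case{\Box}{[h|t]}{h}{\Omega})}{\idfs}$ and $\consfs{(\case{\Box}{[h|t]}{t}{\Omega})}{\idfs}$ extract head and tail and the induction hypothesis at $n-1$ applies; and for a function value, frame congruence with $\apply{\Box}{w_1}{w_k}$ turns $v_1 \ciuequiv v_2$ into a $\ciuequiv$ between $\apply{v_1}{w_1}{w_k}$ and $\apply{v_2}{w_1}{w_k}$, the $\beta$-law for application (the analogue of \Cref{ex:beta1}) rewrites this as a $\ciuequiv$ between the instantiated bodies, and again the induction hypothesis at $n-1$ closes the case.

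For the converse $e_1 \approx e_2 \implies e_1 \ciuequiv e_2$ I would first prove a projection lemma --- for closed $e$ and closed $K$, $\term{K}{e} \iff \exists v:\ \rewritesstar{\idfs}{e}{v} \land \term{K}{v}$ (right-to-left via \Cref{thm:extendframes}, the other direction via determinism and progress, as a configuration that is stuck or divergent stays so under stack extension) --- which, together with \Cref{thm:extendframes}, reduces $\term{K}{e_1} \implies \term{K}{e_2}$ to the value-level claim: for closed $v_1 \le^{\textit{val}} v_2$ and closed $K$, $\term{K}{v_1} \implies \term{K}{v_2}$. I would prove this by induction on the step count of $\term{K}{v_1}$ with a case split on the head frame of $K$: frames that merely carry the value along are routine, while the frames that \emph{consume} the value into a fixed expression --- $\elet{x}{\Box}{e_2}$, $\case{\Box}{p}{e_2}{e_3}$, and the last application frame $\apply{(\fun{f/k}{x_1}{x_k}{e})}{v_1}{\Box}$ --- require a substitutivity (``fundamental'') lemma for $\le^{\textit{val}}$: substituting $\le^{\textit{val}}$-related closed values into one and the same scoped expression yields $\le^{\le^{\textit{val}}}$-related expressions (for the \verb|case| frame this additionally uses that $\le^{\textit{val}}$-related values give the same $\textit{is\_match}$ verdict and, when matching, componentwise $\le^{\textit{val}}$-related $\textit{match}$ substitutions). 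I expect to obtain this substitutivity lemma either by relating $\le^{\textit{val}}_n$ to the step-indexed logical value relation $\mathbb{V}_n$ and invoking \Cref{thm:fundamental} and \Cref{thm:compat}, or by a direct step-indexed induction mirroring the proof of \Cref{thm:fundamental}. Symmetrising then gives $e_1 \ciuequiv e_2$.

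The main obstacle is exactly this substitutivity/fundamental-property lemma for $\le^{\textit{val}}$. The delicate point is index bookkeeping: the index of $\le^{\textit{val}}$ decreases both across list constructors and across function applications, whereas the step-indexed termination relation and $\mathbb{V}$ decrease only across function applications, so translating between them --- or running a self-contained induction --- needs care about which index is consumed at each step, and an outer induction that is not purely structural, since the body of a function value encountered during evaluation need not be a subterm of the original expression.
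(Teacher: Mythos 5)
Your first direction ($e_1 \ciuequiv e_2 \implies e_1 \approx e_2$) is essentially the paper's proof of the same (hard) implication: induction on the index $n$, a diverging \verb|case| frame to force agreement of head constructors and of literals, head/tail projection frames for cons cells, and an application frame plus $\beta$ for function bodies. You factor the argument through a value-level statement ($v_1 \ciuequiv v_2 \implies \forall n: v_1 \le^{\textit{val}}_n v_2$, with CIU transported along evaluation and along frames), whereas the paper keeps the original $e_1, e_2$ and an ever-growing stack $K \concat [\apply{\Box}{v_1}{v_k}] \concat K_2$ so that its induction hypothesis can be stated directly about evaluations of $e_1$ and $e_2$; both routes work, and yours arguably isolates the auxiliary lemmas more cleanly. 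Do make the mismatched-constructor case explicit (e.g.\ $v_1$ a function, $v_2$ a literal must be refuted via the diverging frame applied on the appropriate side), as the paper does.

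The second direction is where the proposal goes astray: you have made it far harder than it is. The behavioural preorder $e_1 \le^{R} e_2$ quantifies over the frame stack $K$ (the $K$ in its definition is universally quantified, which is also what the paper's one-line proof of this implication relies on), so $e_1 \approx e_2$ already asserts that for \emph{every} closed $K$, if $\rewritesstar{K}{e_1}{v_1}$ then $\rewritesstar{K}{e_2}{v_2}$ for some $v_2$. Combined with \Cref{thm:semtermEqTerm} this is literally $\term{K}{e_1} \implies \term{K}{e_2}$ for all $K$, i.e.\ $e_1 \ciupre e_2$, and symmetrising finishes. Your projection lemma, the value-level adequacy claim $v_1 \le^{\textit{val}} v_2 \land \term{K}{v_1} \implies \term{K}{v_2}$, and above all the substitutivity/fundamental lemma for $\le^{\textit{val}}$ are therefore not needed; and the last of these, which you yourself flag as the main obstacle and do not prove, is essentially a soundness proof for a second logical relation --- considerably more work than the theorem itself, with the index bookkeeping problems you describe. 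If $\le^{R}$ were defined with $K = \idfs$ only, your plan would be the right shape, but that is not the relation being related to $\ciuequiv$ here.
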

\begin{proof}
$\Rightarrow$: Since the inductively defined termination coincides with the reduction-style termination (\Cref{thm:semtermEqTerm}), this direction is just a simple consequence of the definitions.

$\Leftarrow$: Since we have $\rewritesstar{K}{e_1}{v_1}$ for some $v_1$ value, we can show that $\rewritesstar{K}{e_2}{v_2}$ for some $v_2$ by $e_1 \ciuequiv e_2$ and \Cref{thm:semtermEqTerm}. We only need to prove, that $\forall n:v_1 \le^{val}_{n} v_2 $.

We carry out induction on $n$. The case $n = 0$ is $\textit{true}$ by definition. For the case $n = 1 + n'$, we show the induction hypothesis:

$\forall K, v_1, v_2: \rewritesstar{K}{e_1}{v_1} \implies \rewritesstar{K}{e_2}{v_2} \implies v_1 \le^{val}_{n'} v_2$.

Now we do case distinction on $v_1$ and $v_2$.
\begin{itemize}
	\item If the constructors of $v_1$ and $v_2$ differ (e.g. $v_1$ is a literal, $v_2$ is a function, etc.), we construct a contradiction from the hypothesis $e_1 \ciuequiv e_2$. If $v_1$ is not a function, we can use the following idea: if $e_1 \ciuequiv e_2$, then $\term{K'}{e_1} \implies \term{K'}{e_2}$. We choose $K' = K \concat [\case{\Box}{v_1}{0}{\Omega}]$, where we used $\concat$ to denote list concatenation, while $\Omega$ denotes the diverging expression $\applyz{(\funz{f/0}{\applyz{f/0}})}$. 
	
	Since $v_1$ and $v_2$ were constructed differently, and $\term{K'}{e_1}$ holds (for the $K'$ above), therefore, $\term{K'}{e_2}$ should also hold (because $e_1 \ciuequiv e_2$), however this is a divergent configuration, because the pattern matching fails, then $\Omega$ needs to be evaluated, so we got a contradiction.
	
	If $v_1$ is a function, we can use the same idea for the other part of $e_1 \ciuequiv e_2$, i.e. $\term{K'}{e_2} \implies \term{K'}{e_1}$ with $v_2$.
	\item If both expressions were literals, we can use the idea above to prove them equal.
	\item If both expressions were empty lists, then they are equivalent by definition.
	\item If $v_1 = \fun{f/k}{x_1}{x_k}{b_1}$ and $v_2 = \fun{f/k}{x_1}{x_k}{b_2}$, 
	for readability, first we introduce two shorthands $\textit{fun}_1 := \fun{f/k}{x_1}{x_k}{b_1}$ and $\textit{fun}_2 := \fun{f/k}{x_1}{x_k}{b_2}$, and later redefine $v_1$ and $v_2$.
	
	We need to prove, that the bodies of these functions behave the same way when substituting their parameters to equal values.
	That is, in any closed frame stack $K_2$, for any closed values $v_1, \dots v_k, v$, $\rewritesstar{K_2}{\subst{b_1}{f/k \mapsto \textit{fun}_1, x_1 \mapsto v_1, \dots x_k \mapsto v_k}}{v} \implies  \exists v': \rewritesstar{K_2}{\subst{b_2}{f/k \mapsto \textit{fun}_2, x_1 \mapsto v_1, \dots x_k \mapsto v_k}}{v'} \land v \le^{val}_{n'} v'$.
	
	Now we connect the hypotheses, since $\rewritesstar{K}{e_1}{\textit{fun}_1}$ and $\rewritesstar{K_2}{\subst{b_1}{f/k \mapsto \textit{fun}_1, x_1 \mapsto v_1, \dots x_k \mapsto v_k}}{v}$ through \Cref{thm:extendframes} to obtain: $\rewritesstar{K \concat $ $ [\apply{\Box}{v_1}{v_k}] $ $ \concat K_2}{e_1}{v}$.
	
	By $e_1 \ciuequiv e_2$, we also prove, that for some $v'$, $\rewritesstar{K \concat $ $[\apply{\Box}{v_1}{v_k}]$ $ \concat K_2}{e_2}{v'}$. From this hypothesis, after taking some reduction steps (by $\rewritesstar{K}{e_2}{\textit{fun}_2}$ and \Cref{thm:extendframes}), we can prove $\rewritesstar{K_2}{\subst{b_2}{f/k \mapsto \textit{fun}_2, x_1 \mapsto v_1, \dots x_k \mapsto v_k}}{v'}$. We only need to prove, that $v \le^{val}_{n'} v'$, which is done by applying the induction hypothesis, moreover, its premises ($\rewritesstar{K \concat [\apply{\Box}{v_1}{v_k}] \concat K_2}{e_1}{v}$ and $\rewritesstar{K \concat$ $ [\apply{\Box}{v_1}{v_k}]$ $ \concat K_2}{e_2}{v'}$) have already been proved. 
	
	\item If $v_1 = [v_{11} | v_{12}]$ and $v_2 = [v_{21} | v_{22}]$, then we need to show, that $v_{11} \le^{val}_{n'} v_{21}$ and $v_{12} \le^{val}_{n'} v_{22}$. We can do that by applying the induction hypothesis twice, but there are still some evaluations to show (note, that we have $\rewritesstar{K}{e_1}{[v_{11} | v_{21}]}$ and $\rewritesstar{K}{e_2}{[v_{12} | v_{22}]}$):
	\begin{itemize}
		\item For some $K'$, $\rewritesstar{K'}{e_1}{v_{11}} \implies \rewritesstar{K'}{e_2}{v_{12}}$ which can be shown for $K' = K \concat [\case{\Box}{[x | y]}{x}{0}]$ (this is basically a \emph{head} function for lists).
		\item For some $K'$, $\rewritesstar{K'}{e_1}{v_{21}} \implies \rewritesstar{K'}{e_2}{v_{22}}$ which can be shown for $K' = K \concat [\case{\Box}{[x | y]}{y}{0}]$ (this is basically a \emph{tail} function for lists).
	\end{itemize}
\end{itemize}

\end{proof}

\section{Conclusion and Future Work}
\label{sec:conclusion}

In this paper, we described our idea of verifying compound refactorings via decomposition to local transformations. To reason about their correctness, we need a suitable program equivalence definition. Initially we investigated and formalised simple behavioural equivalence~\cite{pierce2010software} in Coq, but this turned out not to be expressive enough since it characterised equivalence as structural rather than semantic.

To solve this issue, in this paper we formalised contextual, CIU preorder and equivalence together with logical relations~\cite{culpepper2017contextual,pitts2000operational,pitts2010step,wand2018contextual}. With these equivalences, we are able to prove non structurally-equivalent functions equivalent when they have the same behaviour.  Moreover, we also presented a proof that reasoning about termination is sufficient to characterise equivalence by giving a formal definition of behavioural equivalence which is also proved to coincide with the other definitions. Our definitions and results are formalised in the Coq~\cite{coreerlangmini} proof assistant.

\paragraph{Future work}
In our previous work~\cite{bereczky2020machine,coreerlang}, we formalised a larger subset of Core Erlang including exceptions and side effects. In the short term, we plan to define the equivalence relations discussed in this paper for this larger language. With the introduction of simple side effects, one could argue about how these effects should count towards the equivalence definition. In our earlier \emph{ad hoc} equivalence definitions we used complete and weak equivalence of the results, where complete equivalence required the same side effects to resolve in the same order during evaluation, while weak equivalence allowed this order to be different for the two expression evaluation. It is therefore a future goal for us to investigate such ``weaker'' definitions of equivalences too.

In the medium and longer term, we plan to extend this formalisation with concurrent language features and also formalise Erlang in full in Coq. Our longer-term goals also include the investigation of bisimulation relations for program equivalence, covering \emph{inter alia} formalised concurrent language features.

\section*{Acknowledgements}

%``Application Domain Specific Highly Reliable IT Solutions'' project has been implemented with the support provided from the National Research, Development and Innovation Fund of Hungary, financed under the Thematic Excellence Programme TKP2020-NKA-06 (National Challenges Subprogramme) funding scheme.

% This project was supported by the Pázmány-Eötvös Foundation for Natural Sciences and Informatics.

%Supported by the project ``Integrált kutatói utánpótlás-képzési program az informatika és számítástudomány diszciplináris területein (Integrated program for training new generation of researchers in the disciplinary fields of computer science)'', No.  EFOP-3.6.3-VEKOP-16-2017-00002. The project has been supported by the European Union and co-funded by the European Social Fund.

%Supported by the ÚNKP-20-4 New National Excellence Program of the Ministry for Innovation and Technology from the source of the National Research, Development and Innovation Fund.

The project has been supported by ÚNKP-20-4 New National Excellence Program of the Ministry for Innovation and Technology and ``Application Domain Specific Highly Reliable IT Solutions'' financed under the Thematic Excellence Programme TKP2020-NKA-06 (National Challenges Subprogramme) funding scheme by the National Research, Development and Innovation Fund of Hungary and ``Integrated program for training new generation of researchers in the disciplinary fields of computer science'', No.  EFOP-3.6.3-VEKOP-16-2017-00002 by the European Union and by the European Social Fund.

\bibliography{mybibfile}

\end{document}